\newcolumntype{L}[1]{>{\raggedright\let\newline\\\arraybackslash\hspace{0pt}}m{#1}}
\newcolumntype{C}[1]{>{\centering\let\newline\\\arraybackslash\hspace{0pt}}m{#1}}
\newcolumntype{R}[1]{>{\raggedleft\let\newline\\\arraybackslash\hspace{0pt}}m{#1}}
\let\MYcaption\@makecaption
\let\@makecaption\MYcaption
\let\oldgls\gls
\let\oldglspl\glspl
\newcommand\fussy@ifnextchar[3]{%
	\let\reserved@d=#1%
	\def\reserved@a{#2}%
	\def\reserved@b{#3}%
	\futurelet\@let@token\fussy@ifnch}
\def\fussy@ifnch{%
	\ifx\@let@token\reserved@d
		\let\reserved@c\reserved@a
	\else
		\let\reserved@c\reserved@b
	\fi
	\reserved@c}
\renewcommand{\gls}[1]{%
\oldgls{#1}\fussy@ifnextchar.{\@checkperiod}{\@}}
\renewcommand{\glspl}[1]{%
\oldglspl{#1}\fussy@ifnextchar.{\@checkperiod}{\@}}
\newcommand{\@checkperiod}[1]{%
	\ifnum\sfcode`\.=\spacefactor\else#1\fi
}
\newacronym{wrt}{w.r.t.}{with respect to}
\newacronym{RHS}{R.H.S.}{right-hand side}
\newacronym{LHS}{L.H.S.}{left-hand side}
\newacronym{iid}{i.i.d.}{independent and identically distributed}
\newacronym{SOTA}{SOTA}{state-of-the-art}
\let\saved@bibitem\@bibitem\makeatother
\let\@bibitem\saved@bibitem\makeatother
\crefname{equation}{}{}
\Crefname{equation}{}{}
\crefname{claim}{claim}{claims}
\crefname{step}{step}{steps}
\crefname{line}{line}{lines}
\crefname{condition}{condition}{conditions}
\crefname{dmath}{}{}
\crefname{dseries}{}{}
\crefname{dgroup}{}{}
\crefname{Problem}{Problem}{Problems}
\crefname{Theorem}{Theorem}{Theorems}
\crefname{Corollary}{Corollary}{Corollaries}
\crefname{Proposition}{Proposition}{Propositions}
\crefname{Lemma}{Lemma}{Lemmas}
\crefname{Definition}{Definition}{Definitions}
\crefname{Example}{Example}{Examples}
\crefname{Assumption}{Assumption}{Assumptions}
\crefname{Remark}{Remark}{Remarks}
\crefname{Rem}{Remark}{Remarks}
\crefname{remarks}{Remarks}{Remarks}
\crefname{Appendix}{Appendix}{Appendices}
\crefname{Supplement}{Supplement}{Supplements}
\crefname{Exercise}{Exercise}{Exercises}
\crefname{Theorem_A}{Theorem}{Theorems}
\crefname{Corollary_A}{Corollary}{Corollaries}
\crefname{Proposition_A}{Proposition}{Propositions}
\crefname{Lemma_A}{Lemma}{Lemmas}
\crefname{Definition_A}{Definition}{Definitions}
		\let\Cref\crtCref
		\let\cref\crtcref
\def\cleartheorem#1{%
    \expandafter\let\csname#1\endcsname\relax
    \expandafter\let\csname c@#1\endcsname\relax
}
\def\clearthms#1{ \@for\tname:=#1\do{\cleartheorem\tname} }
		\newtheorem{Theorem}{Theorem}
		\newtheorem{Corollary}{Corollary}
		\newtheorem{Proposition}{Proposition}
		\newtheorem{Theorem}{Theorem}
		\newtheorem{Corollary}[Theorem]{Corollary}
	\newtheorem{Definition}{Definition}
	\newtheorem{Example}{Example}
\theoremstyle{remark}
\theoremstyle{plain}
\newcommand{\qednew}{\nobreak \ifvmode \relax \else
		\ifdim\lastskip<1.5em \hskip-\lastskip
			\hskip1.5em plus0em minus0.5em \fi \nobreak
		\vrule height0.75em width0.5em depth0.25em\fi}
\NewDocumentCommand{\movedownsub}{e{^_}}{%
	\IfNoValueTF{#1}{%
		\IfNoValueF{#2}{^{}}
	}{%
		^{#1}
	}%
	\IfNoValueF{#2}{_{#2}}
}
\let\latexchi\chi
\RenewDocumentCommand{\chi}{}{\latexchi\movedownsub}
\newcommand{\iu}{\mathfrak{i}\mkern1mu}
\newcommand{\calA}{\mathcal{A}}
\newcommand{\calL}{\mathcal{L}}
\newcommand{\calM}{\mathcal{M}}
\newcommand{\calS}{\mathcal{S}}
\newcommand{\bA}{\mathbf{A}}
\newcommand{\bM}{\mathbf{M}}
\newcommand{\bN}{\mathbf{N}}
\newcommand{\bs}{\mathbf{s}}
\newcommand{\bS}{\mathbf{S}}
\newcommand{\bT}{\mathbf{T}}
\newcommand{\bu}{\mathbf{u}}
\newcommand{\bU}{\mathbf{U}}
\newcommand{\bW}{\mathbf{W}}
\newcommand{\bx}{\mathbf{x}}
\newcommand{\bX}{\mathbf{X}}
\newcommand{\by}{\mathbf{y}}
\newcommand{\bz}{\mathbf{z}}
\DeclareSymbolFont{bsfletters}{OT1}{cmss}{bx}{n}
\DeclareSymbolFont{ssfletters}{OT1}{cmss}{m}{n}
\DeclareMathSymbol{\bsfGamma}{0}{bsfletters}{'000}
\DeclareMathSymbol{\ssfGamma}{0}{ssfletters}{'000}
\DeclareMathSymbol{\bsfDelta}{0}{bsfletters}{'001}
\DeclareMathSymbol{\ssfDelta}{0}{ssfletters}{'001}
\DeclareMathSymbol{\bsfTheta}{0}{bsfletters}{'002}
\DeclareMathSymbol{\ssfTheta}{0}{ssfletters}{'002}
\DeclareMathSymbol{\bsfLambda}{0}{bsfletters}{'003}
\DeclareMathSymbol{\ssfLambda}{0}{ssfletters}{'003}
\DeclareMathSymbol{\bsfXi}{0}{bsfletters}{'004}
\DeclareMathSymbol{\ssfXi}{0}{ssfletters}{'004}
\DeclareMathSymbol{\bsfPi}{0}{bsfletters}{'005}
\DeclareMathSymbol{\ssfPi}{0}{ssfletters}{'005}
\DeclareMathSymbol{\bsfSigma}{0}{bsfletters}{'006}
\DeclareMathSymbol{\ssfSigma}{0}{ssfletters}{'006}
\DeclareMathSymbol{\bsfUpsilon}{0}{bsfletters}{'007}
\DeclareMathSymbol{\ssfUpsilon}{0}{ssfletters}{'007}
\DeclareMathSymbol{\bsfPhi}{0}{bsfletters}{'010}
\DeclareMathSymbol{\ssfPhi}{0}{ssfletters}{'010}
\DeclareMathSymbol{\bsfPsi}{0}{bsfletters}{'011}
\DeclareMathSymbol{\ssfPsi}{0}{ssfletters}{'011}
\DeclareMathSymbol{\bsfOmega}{0}{bsfletters}{'012}
\DeclareMathSymbol{\ssfOmega}{0}{ssfletters}{'012}
\newcommand{\bLambda}{\bm{\Lambda}}
\newcommand*\rel@kern[1]{\kern#1\dimexpr\macc@kerna}
\newcommand*\widebar[1]{%
  \begingroup
  \def\mathaccent##1##2{%
    \rel@kern{0.8}%
    \overline{\rel@kern{-0.8}\macc@nucleus\rel@kern{0.2}}%
    \rel@kern{-0.2}%
  }%
  \macc@depth\@ne
  \let\math@bgroup\@empty \let\math@egroup\macc@set@skewchar
  \mathsurround\z@ \frozen@everymath{\mathgroup\macc@group\relax}%
  \macc@set@skewchar\relax
  \let\mathaccentV\macc@nested@a
  \macc@nested@a\relax111{#1}%
  \endgroup
}
\DeclareMathOperator{\var}{var}
\DeclareMathOperator{\cov}{cov}
\newcommand{\ifbcdot}[1]{\ifblank{#1}{\cdot}{#1}}
\DeclarePairedDelimiterX\abs[1]{\lvert}{\rvert}{\ifbcdot{#1}}
\DeclarePairedDelimiterX\parens[1]{(}{)}{\ifbcdot{#1}}
\DeclarePairedDelimiterX\brk[1]{[}{]}{\ifbcdot{#1}}
\DeclarePairedDelimiterX\braces[1]{\{}{\}}{\ifbcdot{#1}}
\DeclarePairedDelimiterX\angles[1]{\langle}{\rangle}{\ifblank{#1}{\cdot,\cdot}{#1}}
\DeclarePairedDelimiterX\ip[2]{\langle}{\rangle}{\ifbcdot{#1},\ifbcdot{#2}}
\DeclarePairedDelimiterX\norm[1]{\lVert}{\rVert}{\ifbcdot{#1}}
\DeclarePairedDelimiterX\ceil[1]{\lceil}{\rceil}{\ifbcdot{#1}}
\DeclarePairedDelimiterX\floor[1]{\lfloor}{\rfloor}{\ifbcdot{#1}}
\DeclareFontFamily{U}{matha}{\hyphenchar\font45}
\DeclareFontShape{U}{matha}{m}{n}{
      <5> <6> <7> <8> <9> <10> gen * matha
      <10.95> matha10 <12> <14.4> <17.28> <20.74> <24.88> matha12
      }{}
\DeclareSymbolFont{matha}{U}{matha}{m}{n}
\DeclareFontFamily{U}{mathx}{\hyphenchar\font45}
\DeclareFontShape{U}{mathx}{m}{n}{
      <5> <6> <7> <8> <9> <10>
      <10.95> <12> <14.4> <17.28> <20.74> <24.88>
      mathx10
      }{}
\DeclareSymbolFont{mathx}{U}{mathx}{m}{n}
\DeclareMathDelimiter{\vvvert}{0}{matha}{"7E}{mathx}{"17}
\DeclarePairedDelimiterX\vertiii[1]{\vvvert}{\vvvert}{\ifbcdot{#1}}
\DeclarePairedDelimiterXPP\trace[1]{\operatorname{Tr}}{(}{)}{}{\ifbcdot{#1}} 
\DeclarePairedDelimiterXPP\col[1]{\operatorname{col}}{\{}{\}}{}{\ifbcdot{#1}} 
\DeclarePairedDelimiterXPP\row[1]{\operatorname{row}}{\{}{\}}{}{\ifbcdot{#1}} 
\DeclarePairedDelimiterXPP\erf[1]{\operatorname{erf}}{(}{)}{}{\ifbcdot{#1}}
\DeclarePairedDelimiterXPP\erfc[1]{\operatorname{erfc}}{(}{)}{}{\ifbcdot{#1}}
\DeclarePairedDelimiterXPP\KLD[2]{D}{(}{)}{}{\ifbcdot{#1}\, \delimsize\|\, \ifbcdot{#2}} 
\DeclarePairedDelimiterXPP\op[2]{\operatorname{#1}}{(}{)}{}{#2} 
\newcommand{\ud}{\,\mathrm{d}} 
\DeclarePairedDelimiterXPP\indicate[1]{{\bf 1}}{\{}{\}}{}{\ifbcdot{#1}}
\NewDocumentCommand\ofrac{s m}{%
	\IfBooleanTF#1%
	{\dfrac{1}{#2}}%
	{\frac{1}{#2}}%
}
\NewDocumentCommand\ddfrac{s m m}{%
	\IfBooleanTF#1%
	{\dfrac{\mathrm{d} {#2}}{\mathrm{d} {#3}}}%
	{\frac{\mathrm{d} {#2}}{\mathrm{d} {#3}}}%
}
\NewDocumentCommand\ppfrac{s m m}{%
	\IfBooleanTF#1%
	{\dfrac{\partial {#2}}{\partial {#3}}}%
	{\frac{\partial {#2}}{\partial {#3}}}%
}
\providecommand\given{}
\DeclarePairedDelimiterX\Set[2]\{\}{%
\renewcommand\given{\SetSymbol[\delimsize]{#1}}
#2
}
\DeclarePairedDelimiterX\Setc[1]\{\}{%
\renewcommand\given{\SetSymbol{:}}
#1
}
\NewDocumentCommand\set{s o m}{%
	\IfBooleanTF#1%
	{\IfValueTF{#2}{\Set*{#2}{#3}}{\Setc*{#3}}}%
	{\IfValueTF{#2}{\Set{#2}{#3}}{\Setc{#3}}}%
}
\NewDocumentCommand{\evalat}{ s O{\big} m e{_^} }{%
\IfBooleanTF{#1}%
{\left. #3 \right|}{#3#2|}%
\IfValueT{#4}{_{#4}}%
\IfValueT{#5}{^{#5}}%
}
\providecommand\given{}
\DeclarePairedDelimiterXPP\cprob[1]{}(){}{
\renewcommand\given{\nonscript\,\delimsize\vert\allowbreak\nonscript\,\mathopen{}}%
\DeclarePairedDelimiterXPP\cexp[1]{}[]{}{
\renewcommand\given{\nonscript\,\delimsize\vert\allowbreak\nonscript\,\mathopen{}}%
#1%
}
\DeclareDocumentCommand \P { s e{_^} d() g } {%
	\mathbb{P}%
	\IfBooleanTF{#1}%
		{
			\IfValueT{#2}{_{#2}}%
			\IfValueT{#3}{^{#3}}%
			\IfValueTF{#5}{\cprob{#4 \given #5}}{\IfValueT{#4}{\cprob{#4}}}%
		}%
		{
			\IfValueT{#2}{_{#2}}%
			\IfValueT{#3}{^{#3}}%
			\IfValueTF{#5}{\cprob*{#4 \given #5}}{\IfValueT{#4}{\cprob*{#4}}}%
		}%
}
\DeclareDocumentCommand \E { s e{_^} o g } {%
	\mathbb{E}%
	\IfBooleanTF{#1}%
		{
			\IfValueT{#2}{_{#2}}%
			\IfValueT{#3}{^{#3}}%
			\IfValueTF{#5}{\cexp{#4 \given #5}}{\IfValueT{#4}{\cexp{#4}}}%
		}%
		{
			\IfValueT{#2}{_{#2}}%
			\IfValueT{#3}{^{#3}}%
			\IfValueTF{#5}{\cexp*{#4 \given #5}}{\IfValueT{#4}{\cexp*{#4}}}%
		}%
}
\DeclareDocumentCommand \Var { s e{_^} d() g } {%
	\var%
	\IfBooleanTF{#1}%
		{
			\IfValueT{#2}{_{#2}}%
			\IfValueT{#3}{^{#3}}%
			\IfValueTF{#5}{\cprob{#4 \given #5}}{\IfValueT{#4}{\cprob{#4}}}%
		}%
		{
			\IfValueT{#2}{_{#2}}%
			\IfValueT{#3}{^{#3}}%
			\IfValueTF{#5}{\cprob*{#4 \given #5}}{\IfValueT{#4}{\cprob*{#4}}}%
		}%
}
\DeclareDocumentCommand \Cov { s e{_^} d() g } {%
	\cov%
	\IfBooleanTF{#1}%
		{
			\IfValueT{#2}{_{#2}}%
			\IfValueT{#3}{^{#3}}%
			\IfValueTF{#5}{\cprob{#4 \given #5}}{\IfValueT{#4}{\cprob{#4}}}%
		}%
		{
			\IfValueT{#2}{_{#2}}%
			\IfValueT{#3}{^{#3}}%
			\IfValueTF{#5}{\cprob*{#4 \given #5}}{\IfValueT{#4}{\cprob*{#4}}}%
		}%
}
\NewDocumentCommand \dist {m o o} {%
\mathrm{#1}\left(%
	\IfValueT{#3}{%
		\tl_if_blank:nTF{ #3 }{\cdot\, \middle|\, }{#3\, \middle|\, }%
	}
	\IfValueT{#2}{#2}%
\right)%
}
\NewDocumentCommand {\cbrace} {t+ D[]{black} D(){\widthof{#5}} m m } {%
	\begingroup%
		\color{#2}
		\IfBooleanTF{#1}{%
			\overbrace{#4}^%
		}{
			\underbrace{#4}_%
		}%
		{\parbox[c]{#3}{\centering\footnotesize{#5}}}%
	\endgroup%
}
\let\oldforall\forall
\renewcommand{\forall}{\oldforall \, }
\let\oldexist\exists
\renewcommand{\exists}{\oldexist \, }
\newcommand{\rankcolor}[2]{%
	\expandafter\renewcommand\csname #1\endcsname[1]{%
		\ifblank{##1}{%
			{\color{#2} \textbf{#2}}%
		}{%
			\ifmmode
				\textcolor{#2}{\bm{##1}}%
			\else%
				{\color{#2} \textbf{##1}}%
			\fi	
		}%
	}
}
\newcommand{\figref}[1]{Fig.~\ref{#1}}
\DeclareDocumentCommand{\includeCroppedPdf}{ o O{./Figures/} m }{
	\IfFileExists{#2#3-crop.pdf}{}{%
		\immediate\write18{pdfcrop #2#3.pdf #2#3-crop.pdf}}%
	\includegraphics[#1]{#2#3-crop.pdf}
}
\newcommand*{\addFileDependency}[1]{
  \typeout{(#1)}
  \@addtofilelist{#1}
  \IfFileExists{#1}{}{\typeout{No file #1.}}
}
\definecolor{gray90}{gray}{0.9}
\def\colorlist{red,blue,brown,cyan,darkgray,gray,lightgray,green,lime,magenta,olive,orange,pink,purple,teal,violet,white,yellow}
\def\startcomment{[}
	\newcommand{\createcolor}[1]{%
			\expandafter\newcommand\csname #1\endcsname[1]{{\color{#1} ##1}}%
	}
	\newcommand{\msout}[1]{\text{\color{green} \sout{\ensuremath{#1}}}}
	\newcommand{\del}[1]{{\color{green}\ifmmode \msout{#1}\else\sout{#1}\fi}}
	\newcommand{\createcolor}[1]{%
			\expandafter\newcommand\csname #1\endcsname[1]{%
				\noexpandarg%
				\StrChar{##1}{1}[\firstletter]%
				\if\firstletter\startcomment%
					\relax
				\else%
					##1
				\fi
			}%
	}
	\newcommand{\msout}[1]{}
	\newcommand{\del}[1]{}
\def\@tempa#1,{%
    \ifx\relax#1\relax\else
        \createcolor{#1}%
        \expandafter\@tempa
    \fi
}
\newcommand{\hhide}[1]{}
	\def\@testdef #1#2#3{%
		\def\reserved@a{#3}\expandafter \ifx \csname #1@#2\endcsname
			\reserved@a  \else
			\typeout{^^Jlabel #2 changed:^^J%
				\meaning\reserved@a^^J%
				\expandafter\meaning\csname #1@#2\endcsname^^J}%
			\@tempswatrue \fi}
\crefname{question}{question}{questions}
\pgfplotsset{compat=1.5}
\providecommand{\U}[1]{\protect\rule{.1in}{.1in}}
\def\x{{\mathbf x}}
\theoremstyle{definition}
\begin{document}
\date{}
\title{The faces of Convolution: from the Fourier theory to algebraic signal processing}
\author{Feng~Ji, Wee~Peng~Tay,~\IEEEmembership{Senior Member,~IEEE} and Antonio Ortega,~\IEEEmembership{Fellow,~IEEE}%
\thanks{F.\ Ji and W.\ P.\ Tay are with the School of Electrical and Electronic Engineering, Nanyang Technological University, 639798, Singapore (e-mail: jifeng@ntu.edu.sg, wptay@ntu.edu.sg). A.\ Ortega is with the Department of Electrical and Computer Engineering, University of Southern California, Los Angeles, CA 90089-2564  (e-mail: aortega@usc.edu).}%
}
\maketitle

\section{Introduction and motivations}
\medskip

\setlength{\epigraphwidth}{0.93\textwidth}
\epigraph{You have three faces.
The first face, you show to the world.
The second face, you show to your close friends and your family.
The third face, you never show anyone. It is the truest reflection of who you are!}{A Japanese saying}

Convolution is a prevalent concept in both signal processing and machine learning. 
The early uses of the term ``convolution'' date back to the 1700s. In \emph{Recherches sur différents points importants du système du monde (1754)}, D'Alembert used the convolution integral to derive Taylor's theorem. The expression
\begin{align} \label{eq:ifg}
    \int f(u)g(x-u) \ud u
\end{align}
is used by S.\ Lacroix in \emph{Treatise on differences and series (1797-1800)}. Since then, the notion has appeared in numerous works, with different names. The term ``convolution'' was popularized in the 1950s-1960s to be the specific integral form in \cref{eq:ifg}. This is the \emph{first face} of convolution. 

In the 1930s-1940s, L.\ Pontryagin and other mathematicians (e.g., E.\ van Kampen and A.\ Weil) laid down the foundations of the theory of Pontryagin duality \cite{Mor77, Ram99}. The theory takes the point of view that the Fourier kernel $e^{-i2\pi\xi(\cdot)}$ defines a duality between  $\mathbb{R}$ and itself. The entire Fourier theory can be built from such a duality. The formalism preserves all familiar expressions in the classical theory, for example, convolution still takes the formal expression \cref{eq:ifg}. This vast generalization has numerous applications in harmonic analysis, signal processing, and number theory \cite{Ram99}. For example, the discrete Fourier transform (DFT) is the duality between the finite cyclic group $\mathbb{Z}/n$ and itself.

In the new age of data and artificial intelligence, convolution is associated with many concepts. For example, in computer vision, we have convolutional neural networks (CNNs) \cite{Lec98}, which involve pixel-wise multiplication with kernels of small size. In graph signal processing (GSP) and graph neural networks (GNN) \cite{Shu12, Def16}, we have the notion of graph convolution, which is closely related to graph message passing. For example, the above-mentioned DFT can also be interpreted as GSP on the directed cycle graph. With many seemingly different constructions bearing the same name, it is natural to ask whether there is any relationship between these concepts and if there is a way to place them within a common general framework. As an attempt, algebraic signal processing (ASP) \cite{Pus08} is proposed as a convolutional framework for such a purpose. 

However, if we compare ASP, an algebraic generalization of GSP, with the Pontryagin theory, convolutions no longer take any concrete form similar to that of (\ref{eq:ifg}), i.e., it shows up with a different face. ASP relies largely on the point of view that convolutions are linear transformations and they form a ring analogous to the polynomial ring. Therefore, it remains to elucidate the relations between the classical and more recent approaches. 

In this expository article, we provide a self-contained discussion of a selected list of theories, each embedded with the concept of ``convolution'': from the classical Fourier theory to the theory of ASP. We aim to achieve a few goals in this discussion: 
\begin{itemize}
    \item We carefully describe, with emphasis on mathematical rigor, the setup of each theory and explain how convolution is perceived. We illustrate with examples of how convolution is defined in each theory.
    \item By following the approximate historical timeline, we explain the relations among various notions of convolutions. We identify rigorous relationships between the different theories, where possible.
    \item We provide an opinion on whether there is a consistent approach to convolution under the current literature status. 
\end{itemize}
Convolution has many different ``faces''. Toward the end of this article, we describe what we see that might be common to what we know. The highlight is the observation that \emph{(informally) convolution determines the Fourier theory}. 

The rest of the paper is organized as follows. In \cref{sec:tcf}, we discuss the classical Fourier theory. Its generalization is the Pontryagin theory, which we provide a self-contained overview in \cref{sec:pd}. In \cref{sec:dtd}, we introduce GSP, which offers a different but equivalent interpretation of discrete Fourier transform in \cref{sec:pd}. Thus far, the notion of convolution has appeared in a few different forms. We compare them in \cref{sec:cg}. In \cref{sec:asp}, we describe the theory of ASP, which is a generalization of GSP. An overview of all the theories is provided in \cref{sec:au}. We give our point of view of a unification of different theories in terms of convolution. 

\section{The classical Fourier theory} \label{sec:tcf}
It is no exaggeration to say that the classical notion of convolution is embedded in the classical Fourier theory. Therefore, it is helpful to first properly describe the Fourier theory \cite{Rud87, Lax02}.

We denote the variables of (complexed valued) functions on the time domain and frequency domains, both can be identified by $\mathbb{R}$, by $t$ and $\omega$ respectively. Formally, the Fourier transform takes the integral expression
\begin{align}
    f \mapsto \widehat{f}, \text{ with } \widehat{f}(\omega) = \int_{-\infty}^{\infty} f(t)e^{-i2\pi t\omega}\ud t.
\end{align}

Apparently, the integral cannot be defined unconditionally for any function. We describe a few candidate domains of the Fourier transform as follows:
\begin{itemize}
    \item The space $C_c(\mathbb{R})$ of continuous functions with compact support. If $f \in C_c(\mathbb{R})$, then for any $\omega$, $f(t)e^{-i2\pi t\omega}$ is compactly supported and uniformly bounded (in $t$). Therefore, $\widehat{f}(\omega)$ is well-defined. Moreover, the Plancherel theorem holds:
    \begin{Theorem}
    \begin{align*}
        \int_{-\infty}^{\infty}|f(t)|^2\ud t = \int_{-\infty}^{\infty} |\widehat{f}(\omega)|^2\ud \omega.
        \end{align*}
    \end{Theorem}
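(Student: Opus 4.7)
The plan is to reduce Plancherel's identity to a single pointwise Fourier inversion statement via a convolution trick. Define the involution $\tilde{f}(t) := \overline{f(-t)}$, which again lies in $C_c(\mathbb{R})$, and set $g := f \ast \tilde{f}$, where the convolution integral in \cref{eq:ifg} is well-defined pointwise because both factors are bounded and compactly supported. The function $g$ is then continuous with compact support contained in $\mathrm{supp}(f) - \mathrm{supp}(f)$. Unrolling the convolution at $t=0$ and interchanging integrals by Fubini (legitimate by compact support) yields the two key identities
\begin{align*}
g(0) &= \int_{-\infty}^{\infty} f(u)\overline{f(u)}\ud u = \int_{-\infty}^{\infty} |f(t)|^2 \ud t, \\
\widehat{g}(\omega) &= \widehat{f}(\omega)\,\widehat{\tilde{f}}(\omega) = |\widehat{f}(\omega)|^2 \ge 0,
\end{align*}
where $\widehat{\tilde{f}} = \overline{\widehat{f}}$ follows from the change of variables $t\mapsto -t$. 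Thus Plancherel reduces to proving the pointwise inversion identity $g(0) = \int \widehat{g}(\omega)\ud\omega$.

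To establish this inversion, I would use Gaussian (Abel) summability, because a priori we do not know that $\widehat{g} \in L^1$. For each $\epsilon > 0$, the function $\widehat{g}(\omega) e^{-\epsilon\omega^2}$ is integrable, since $\widehat{g}$ is continuous and bounded. A Fubini interchange, together with the classical fact that the Fourier transform of a Gaussian is again a Gaussian, gives
\begin{align*}
\int_{-\infty}^{\infty} \widehat{g}(\omega)\, e^{-\epsilon\omega^2} \ud\omega = \int_{-\infty}^{\infty} g(t)\, K_\epsilon(t) \ud t,
\end{align*}
where $K_\epsilon$ is the dilated Gaussian of total mass one. As $\epsilon \to 0^+$, $K_\epsilon$ is an approximate identity, so continuity of $g$ at $0$ forces the right-hand side to converge to $g(0)$. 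On the left-hand side, $\widehat{g} \ge 0$ allows monotone convergence to push $\int \widehat{g}(\omega)\,e^{-\epsilon\omega^2}\ud\omega$ up to $\int \widehat{g}(\omega)\ud\omega$. Equating limits yields $g(0) = \int \widehat{g}(\omega)\ud\omega$, which combined with the identities above is Plancherel; as a byproduct, $\widehat{g} \in L^1$ and $\widehat{f} \in L^2$.

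The main obstacle is precisely the inversion step: without a priori knowledge that $\widehat{g} \in L^1$, one cannot directly invoke a Fourier inversion theorem. The Gaussian summability argument is the most economical way to bypass this within the $C_c$ setting, because the Gaussian simultaneously mollifies on the frequency side, acts as an approximate identity on the time side, and is its own Fourier transform (up to dilation), so no outside machinery such as the Schwartz class or a general inversion theorem is required. Everything else, namely the Fubini interchanges, the involution identity $\widehat{\tilde{f}} = \overline{\widehat{f}}$, and the continuity and compactness of the support of $g$, is elementary once the compact support of $f$ is used to guarantee absolute integrability throughout.
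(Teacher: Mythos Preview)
Your argument is correct and is one of the standard self-contained proofs of Plancherel for $C_c(\mathbb{R})$: reduce to the autocorrelation $g=f*\tilde f$, so that $g(0)=\|f\|_2^2$ and $\widehat g=|\widehat f|^2\ge 0$, and then recover $g(0)=\int\widehat g$ by Gaussian (Abel) summability, using monotone convergence on the frequency side and the approximate-identity property on the time side. The only implicit step you rely on is $\widehat{f*\tilde f}=\widehat f\cdot\widehat{\tilde f}$, but for compactly supported bounded functions this is an elementary Fubini computation and does not presuppose Plancherel, so there is no circularity.

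As for comparison with the paper: the paper does not actually prove this theorem. It is an expository survey, and the Plancherel identity is simply \emph{stated} as a known fact about $C_c(\mathbb{R})$ and then used to extend the Fourier transform to $L^2(\mathbb{R})$ by density. So there is no ``paper's own proof'' to compare against; your write-up supplies precisely the kind of argument one would cite from a textbook (e.g., Rudin or Lax, which the paper references) to back up the stated theorem.
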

    \item The space $L^1(\mathbb{R})$ of integrable functions. If $f \in L^1(\mathbb{R})$, then it is easy to verify that 
    \begin{align*}
    |\widehat{f}(\omega)| \leq \int_{-\infty}^{\infty} |f(t)e^{-i2\pi t\omega}|\ud t= \norm{f}_1
    \end{align*}
    as $|e^{-i2\pi t\omega}|=1$. Therefore, the Fourier transform defines a bounded linear map $L^1(\mathbb{R}) \to L^{\infty}(\mathbb{R})$. Moreover, the Riemman-Lebesgue lemma states that $\widehat{f}$ is a continuous function vanishing at $\infty$ if $f\in L^1(\mathbb{R})$. 
    \item The space $L^2(\mathbb{R})$ of square integrable functions. We notice that $C_c(\mathbb{R})$ is a dense subspace of $L^2(\mathbb{R})$. Therefore, for any $f\in L^2(\mathbb{R})$, we find a sequence $(f_i)_{i\geq \infty}$ of functions in $C_c(\mathbb{R})$ such that $f_i \to f$ as $i \to \infty$ in $L^2$-norm. By the Plancherel theorem, we observe that $\norm{f_i-f_j}_2 = \norm{\widehat{f_i}-\widehat{f_j}}_2$. As $(f_i)_{i\geq \infty}$ is a Cauchy sequence, so is $(\widehat{f_i})_{i\geq \infty}$. Then $\widehat{f}$ is defined as the $L^2$-limit of the Cauchy sequence $(\widehat{f_i})_{i\geq \infty}$. As a consequence, the Plancherel theorem holds for any $f \in L^2(\mathbb{R})$. 
    
    In this case, the Fourier transform $\widehat{\cdot}: L^2(\mathbb{R}) \to L^2(\mathbb{R})$ is:
\begin{enumerate}[(a)]
    \item Invertible: the inverse transform is defined by
    \begin{align*}
        f(t) = \int_{-\infty}^{\infty} \widehat{f}(\omega)e^{t2\pi \iu\omega}\ud \omega.
    \end{align*}
    \item Inner product preservation: for $f,g\in L^2(\mathbb{R})$, we have
    \begin{align*}
        \int_{-\infty}^{\infty} f(t)\widehat{g}(t) \ud t = \int_{-\infty}^{\infty} \widehat{f}(t)g(t) \ud t. 
    \end{align*}
\end{enumerate}
As a formal consequence, we conclude the following:
\begin{Theorem}
    The Fourier transform is a unitary operator on the Hilbert space $L^2(\mathbb{R})$. 
\end{Theorem}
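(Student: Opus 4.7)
The plan is to combine the three facts already in hand---Plancherel's identity on $L^2(\mathbb{R})$, the explicit inversion formula in (a), and the reciprocity formula in (b)---into the definition of unitarity: namely, that the Fourier transform $\mathcal{F}: L^2(\mathbb{R})\to L^2(\mathbb{R})$ is a bijective bounded linear map preserving the inner product. I would break this into four steps.

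First, I would record that $\mathcal{F}$ is linear and bounded. Linearity on $C_c(\mathbb{R})$ is immediate from linearity of the integral, and because $\mathcal{F}$ is defined on all of $L^2(\mathbb{R})$ as the unique continuous extension of its restriction to the dense subspace $C_c(\mathbb{R})$, linearity passes to the limit. The Plancherel identity $\|\widehat{f}\|_2 = \|f\|_2$, already established for every $f\in L^2(\mathbb{R})$, then shows that $\mathcal{F}$ is an isometry; in particular it is bounded with operator norm $1$.

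Second, I would upgrade the norm-preservation to inner-product preservation via polarization. For any $f,g\in L^2(\mathbb{R})$, the complex polarization identity expresses $\langle f,g\rangle$ as a finite linear combination of the squared norms $\|f+i^k g\|_2^2$ for $k=0,1,2,3$. Applying $\mathcal{F}$ and using linearity together with Plancherel on each of these four vectors yields
\begin{align*}
    \langle \widehat{f},\widehat{g}\rangle = \langle f,g\rangle.
\end{align*}
This is the statement that $\mathcal{F}$ is an isometric embedding of the Hilbert space into itself. (Alternatively, one can read this off directly by combining (b) with Plancherel applied to $\bar g$, but polarization from Plancherel is the cleanest route.)

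Third, I would invoke property (a): the inversion formula provides an explicit two-sided inverse of $\mathcal{F}$ on $L^2(\mathbb{R})$, so $\mathcal{F}$ is a bijection of $L^2(\mathbb{R})$ onto itself. Finally, a bijective linear isometry between Hilbert spaces is by definition unitary (equivalently, the adjoint satisfies $\mathcal{F}^{*} = \mathcal{F}^{-1}$, which one also sees directly from the identity in step two rewritten as $\langle \mathcal{F} f, h\rangle = \langle f, \mathcal{F}^{-1} h\rangle$ after substituting $g = \mathcal{F}^{-1} h$). Combining the three bullet points completes the proof.

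The substantive content---really the only nontrivial input---has already been done earlier in the section: extending $\mathcal{F}$ from $C_c(\mathbb{R})$ to $L^2(\mathbb{R})$ by the Cauchy-sequence argument, and verifying that the formally-written inverse integral genuinely inverts $\mathcal{F}$ on this $L^2$ extension. Given those, the remaining argument is purely formal: polarization plus ``bijective isometry equals unitary.'' I therefore do not expect a genuine obstacle; the only care needed is to insist that every equality is interpreted in the $L^2$ sense (pointwise versions may fail), so that both the inversion formula in (a) and the pairing identity in (b) are to be read as identities of $L^2$ functions, not of integrals taken pointwise.
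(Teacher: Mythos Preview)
Your proposal is correct and matches the paper's approach: the paper simply declares the theorem ``a formal consequence'' of (a) invertibility and (b) the pairing identity together with Plancherel, without spelling out the steps, and what you have written is exactly that formal deduction (isometry via Plancherel, inner-product preservation via polarization, bijectivity via (a), hence unitary). There is nothing to add; you have supplied the details the paper omits.
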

\end{itemize}

More generally, the Fourier transform can be defined for tempered distributions, which is the dual of the Schwartz space. We shall not pursue this in the article and interested readers can find details in \cite{Lax02}.

Having described the fundamental framework, we can now introduce convolutions. It is the safest to consider $f,g \in C_c(\mathbb{R})$, their \emph{convolution} is defined by the following integral 
\begin{align}\label{eq:h=f}
    h(t) = f*g(t) = \int_{-\infty}^{\infty} f(z)g(t-z) \ud z.
\end{align}
The resulting function $h(\cdot)$ also belongs to $C_c(\mathbb{R})$. Intuitively, the convolution $h(\cdot)$ can be viewed as a weighted sum of $f(\cdot)$ with weight $g(\cdot)$ and vice versa. 

We may apply the same integral formula for $f$ and $g$ from other function spaces. For example, if $f,g\in L^1(\mathbb{R})$, then $h=f*g \in L^1(\mathbb{R})$ as well and $\norm{h}_1\leq \norm{f}_1\cdot \norm{g}_1$. On the other hand, if $f,g\in L^2(\mathbb{R})$, then $h=f*g$ does not necessarily belong to $L^2(\mathbb{R})$; instead, $f*g \in L^{\infty}(\mathbb{R})$. The convolution theorem relates convolution and Fourier transform. 
\begin{Theorem} \label{thm:ifi}
If $f*g \in L^1(\mathbb{R})$ or $L^2(\mathbb{R})$, then
    \begin{align*}
        \widehat{f*g} = \widehat{f}\cdot \widehat{g}. 
    \end{align*}
    Moreover, $L^1(\mathbb{R})$ is a commutative and associative algebra with convolution $*$ as the multiplication. 
\end{Theorem}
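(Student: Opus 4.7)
The plan has three stages: (a) establish $\widehat{f*g} = \widehat{f}\cdot\widehat{g}$ for $f, g \in L^1(\mathbb{R})$ by a direct Fubini calculation; (b) extend the identity to the $L^2$ setting via a density argument; and (c) verify the algebra axioms on $(L^1(\mathbb{R}), *)$ using Tonelli/Fubini. I would carry them out in this order, since (b) leans on (a), while (c) is independent of the Fourier identity and is the most routine.

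For stage (a), I start from the double integral
$$\widehat{f*g}(\omega) = \int_{-\infty}^{\infty}\int_{-\infty}^{\infty} f(z)\, g(t-z)\, e^{-i2\pi t\omega}\ud z\ud t.$$
Tonelli applied to $|f(z)|\,|g(t-z)|$ gives the finite bound $\norm{f}_1 \norm{g}_1$, which in turn justifies an application of Fubini. After swapping the order of integration, writing $e^{-i2\pi t\omega} = e^{-i2\pi z\omega}\, e^{-i2\pi(t-z)\omega}$, and performing the translation $u = t - z$ for fixed $z$, the double integral factors as $\widehat{f}(\omega)\cdot\widehat{g}(\omega)$.

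For stage (c), the closure bound $\norm{f*g}_1 \leq \norm{f}_1 \norm{g}_1$ is Tonelli applied to $\int |(f*g)(t)|\ud t$; bilinearity is immediate from linearity of the integral; commutativity $f*g = g*f$ is once more the substitution $u = t - z$; and associativity $(f*g)*h = f*(g*h)$ reduces, after writing each side as a triple integral, to a single Fubini interchange, justified because the absolute triple integrand has total mass $\norm{f}_1 \norm{g}_1 \norm{h}_1$ on $\mathbb{R}^3$.

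For stage (b), I would approximate $f, g \in L^2(\mathbb{R})$ by sequences $(f_n), (g_n) \subset C_c(\mathbb{R}) \subset L^1 \cap L^2$ in $L^2$-norm, apply stage (a) to each pair to obtain $\widehat{f_n * g_n} = \widehat{f_n}\cdot\widehat{g_n}$, and pass to the limit using Plancherel. The main obstacle lies precisely here: in $L^2$ the convolution is not automatically well-defined pointwise, and even when $f*g \in L^2$ one must interchange limits on the transform side, where $\widehat{\cdot}$ is itself defined only as an $L^2$-limit of $C_c$-computations. The cleanest route is to select approximants controlled simultaneously in $L^1$ and $L^2$ (e.g.\ by truncating in time and mollifying), so that $f_n * g_n \to f*g$ in a topology compatible with the $L^2$-Fourier transform, and the limiting identity holds a.e.\ on the frequency axis.
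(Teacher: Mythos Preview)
The paper does not supply a proof of this theorem; it is stated as a classical fact in the expository \cref{sec:tcf} and the article moves on. So there is no ``paper's own proof'' to compare your proposal against.

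Taken on its own, your outline is the standard one and stages (a) and (c) are fine. For stage (b), your acknowledged obstacle is real, and the sketch as written does not quite close the gap. The issue is that $f_n*g_n \to f*g$ holds in $L^\infty$ (from $\norm{f_n*g_n - f*g}_\infty \leq \norm{f_n-f}_2\norm{g_n}_2 + \norm{f}_2\norm{g_n-g}_2$), not in $L^2$, so you cannot directly invoke Plancherel to pass $\widehat{\cdot}$ through the limit. A cleaner route in the $L^2\times L^2$ case is to work on the frequency side: since $\widehat{f},\widehat{g}\in L^2$, the product $\widehat{f}\cdot\widehat{g}$ lies in $L^1$, and one checks directly (e.g.\ via Fubini on $C_c$ approximants and uniform convergence) that its inverse transform is the continuous function $f*g$; the hypothesis $f*g\in L^2$ then lets you read this as the desired $L^2$-Fourier identity. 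Your ``truncate and mollify'' suggestion can also be made to work, but you would need to argue convergence of $f_n*g_n$ to $f*g$ in $L^2$ under the additional hypothesis $f*g\in L^2$, which requires more than $L^2$-approximation of $f$ and $g$ separately.
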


From this result, to generalize this classical notion of convolution to functions on a general domain $A$, there are two options: 
\begin{itemize}
    \item Formulate the integral as in (\ref{eq:h=f}). 
    \item Use the Fourier transform via the convolution theorem.
\end{itemize}
For the former approach, we need a translation operation ``$-$'' in $A$, while for the latter option, we need to make sense of the Fourier transforms, in particular, find an appropriate replacement of the integration kernel $e^{-t2\pi \iu\omega}$ in $A$. In both cases, we require a measure on $A$. All these can be accomplished for $A$ being a locally compact abelian group as we shall discuss in the next section (\cite{Mor77, Ram99}). 

\section{Pontryagin duality} \label{sec:pd}

Recall from the previous section that in order to introduce a Fourier theory and convolution for functions on a domain $A$, we need the following ingredients analogous to the classical theory:
\begin{enumerate}
    \item Algebraic condition: there is a translation on $A$.
    \item Analytic condition: there is a measure on $A$ to perform integrations.
    \item A family of ``Fourier kernels''.
\end{enumerate}

We go through these one-by-one. To fulfill the algebraic condition, we require $A$ to be an abelian group. More details regarding algebraic concepts introduced in the paper can be found in standard textbooks \cite{Ati94, Lan02, Hun03, Art11}. 

\begin{Definition}
$A$ is an abelian group if there is a binary operation $+: A\times A \to A$, called the addition, such that the following holds
\begin{itemize}
\item Associativity: $a+b+c = a+(b+c)$ for $a,b,c\in A$.
\item Commutativity: $a+b=b+a$ for $a,b\in A$.
    \item Existence of the identity: there is an element $0 \in A$ such that $0+a=a$ for any $a\in A$.
    \item Existence of inverses: for $a\in A$, there is $-a\in A$ such that $a+(-a)=0$. 
\end{itemize}
\end{Definition}

\begin{Example}
    \begin{enumerate}
        \item $\mathbb{Z},\mathbb{R}$ and $\mathbb{C}$ are abelian groups with the usual addition $+$.
        \item Let $S^1$ be the set of all complex numbers of (complex) norm $1$. Then it is an abelian group under the complex multiplication $\times$, with $1$ as the identity element. $S^1$ is called the circle group, as it can be visualized as the unit circle on the complex plane. 
        \item Let $\mathbb{Z}/n\mathbb{Z}$ be the finite set $0,\ldots, n-1$. Define $+$ on $\mathbb{Z}/n\mathbb{Z}$ by taking sum as integers modulo $n$It is straightforward to check that $\mathbb{Z}/n\mathbb{Z}$ is an abelian group with $0$ as the identity. It is the cyclic group of order $n$. 
    \end{enumerate}
\end{Example}

As a consequence of the setup, we can perform translation on $A$, which is nothing but addition with $-a$ for $a\in A$. When $A$ is $\mathbb{R}$, this agrees with the usual notion of translation. To proceed with the analytic condition on the existence of measure, we first impose a topological condition on $A$.

\begin{Definition}
An abelian group $A$ is locally compact if there is a topology $\mathcal{T}$ on $A$ such that the following holds
\begin{enumerate}
    \item The binary operation $+: A\times A \to A$ is continuous (we endue $A\times A$ with the product topology).
    \item Taking inverse $-: A \to A, a\mapsto -a$ is continuous.
    \item Each $a \in A$ has a compact neighborhood, i.e., there is an open set $U$ and compact set $K$ such that $a\in U \subset K$.  
\end{enumerate}
\end{Definition}

For example, $\mathbb{R}, \mathbb{C}$, and $S^1$ are locally compact given the usual Euclidean topology. While $\mathbb{Z}$ and $\mathbb{Z}/n\mathbb{Z}$ are locally compact given the discrete topology. An important property of a locally compact abelian group is the existence of Haar measure.

\begin{Theorem}
    If $A$ is a locally compact abelian group, there is a translation invariant Radon measure $\mu$, called the Haar measure, on $A$. Moreover, it is unique up to a scalar multiple.  
\end{Theorem}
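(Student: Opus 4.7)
The plan is to prove existence and uniqueness separately. For existence, I will construct a translation-invariant positive linear functional $I \colon C_c(A) \to \mathbb{R}$ and then invoke the Riesz representation theorem, which produces a Radon measure $\mu$ with $I(f) = \int f \ud\mu$; translation invariance of $I$ then transfers directly to $\mu$.

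For the construction of $I$, I would follow Weil's classical route via covering numbers. Fix once and for all a nonzero nonnegative reference $f_0 \in C_c(A)$. For any nonzero nonnegative $\phi \in C_c(A)$ and any nonnegative $f \in C_c(A)$, local compactness of $A$ together with compactness of $\supp(f)$ guarantees that $\supp(f)$ can be covered by finitely many translates of any fixed neighborhood of the identity, so there exist constants $c_1,\ldots,c_n\ge 0$ and elements $a_1,\ldots,a_n\in A$ with $f(x)\le \sum_i c_i \phi(x-a_i)$ for all $x$. Define the covering number $(f:\phi) := \inf \sum_i c_i$ over all such representations, and set $I_\phi(f) = (f:\phi)/(f_0:\phi)$. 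A routine check shows $I_\phi$ is translation invariant, positively homogeneous, monotone, and subadditive, and that $(f_0:f)^{-1} \le I_\phi(f) \le (f:f_0)$, so for each $f$ the value $I_\phi(f)$ lies in a compact interval depending only on $f$.

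To extract the limit I would embed the family $(I_\phi)$ in the product space $X = \prod_{f} [(f_0:f)^{-1},(f:f_0)]$, which is compact by Tychonoff, and take a cluster point $I$ of the net indexed by shrinking supports of $\phi$ around the identity. Translation invariance, positive homogeneity, monotonicity, and subadditivity pass to the limit; $I$ then extends by linearity to all of $C_c(A)$ (via differences of positive functions, then complex scalars), and the Riesz representation theorem produces the Haar measure. For uniqueness, given two nonzero translation-invariant Radon measures $\mu,\nu$, I would pick $f\in C_c(A)$ nonnegative with $\int f\ud\mu,\int f\ud\nu>0$, and for arbitrary nonnegative $g\in C_c(A)$ compute the double integral $\int\!\int f(x)g(x+y)\ud\mu(x)\ud\nu(y)$ in both orders using Fubini and translation invariance. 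The resulting identity forces the ratio $\int g \ud\nu / \int g \ud\mu$ to be independent of $g$, yielding $\nu = c\mu$ for some $c>0$.

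The main obstacle is establishing that the limiting $I$ is actually additive, not merely subadditive. The delicate step in Weil's construction is showing that whenever $f_1,f_2\ge 0$ have supports that can be separated by some open neighborhood $V$ of the identity, once $\supp(\phi)$ is small enough that $\supp(\phi)-\supp(\phi) \subset V$, an optimal $\phi$-covering of $f_1+f_2$ splits into separate $\phi$-coverings of $f_1$ and $f_2$, so that $I_\phi(f_1+f_2) \ge I_\phi(f_1)+I_\phi(f_2) - \varepsilon$. Reducing the general case to this one requires approximating arbitrary $f_1,f_2$ by functions with well-separated supports via a partition of unity subordinate to a fine cover, and this partition-of-unity reduction, which essentially uses the local compactness of $A$, is the classical heart of the argument.
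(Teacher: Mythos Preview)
The paper does not prove this theorem; it is stated as a classical fact in an expository survey, with the implicit reference to standard sources such as \cite{Ram99}. So there is no ``paper's own proof'' to compare against, and your proposal is supplying something the paper deliberately omits.

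That said, your outline is the standard Weil construction and is essentially sound, with one soft spot. Your Step~1 for additivity (functions with supports separated by a neighborhood $V$, $\phi$ with $\supp\phi-\supp\phi\subset V$) is correct and in fact gives exact equality $(f_1+f_2:\phi)=(f_1:\phi)+(f_2:\phi)$, not merely an $\varepsilon$-inequality. But your Step~2, ``reduce the general case to separated supports via a partition of unity,'' is not how the classical argument runs, and I do not see how to make that reduction work: if $f_1$ and $f_2$ overlap, no partition of unity will produce pieces of $f_1$ and $f_2$ with mutually separated supports. The textbook device is instead the ratio trick: set $h=f_1+f_2+\varepsilon g$ with $g\equiv 1$ on the union of the supports, put $h_i=f_i/h$, use uniform continuity of $h_i$ so that for $\phi$ of small support one has $f_i\le\sum_j c_j\bigl(h_i(a_j)+\delta\bigr)\phi(\cdot-a_j)$ whenever $h\le\sum_j c_j\phi(\cdot-a_j)$, and then sum over $i$ using $h_1+h_2\le 1$. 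This yields $(f_1:\phi)+(f_2:\phi)\le (1+2\delta)(h:\phi)\le (1+2\delta)\bigl((f_1+f_2:\phi)+\varepsilon(g:\phi)\bigr)$, and near-additivity of the limit follows. Replacing your partition-of-unity sentence with this argument closes the gap; the rest of your sketch (Tychonoff for the limit, Riesz for the measure, the Fubini double-integral for uniqueness) is fine.
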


Translation invariant in the theorem means that for any measurable set $U$ and $a\in A$, we have $\mu(U)=\mu(U+a)$. The Lebesgue measure on each of $\mathbb{R}, \mathbb{C}$ and $S^1$ is the Haar measure. On $\mathbb{Z}$ and $\mathbb{Z}/n\mathbb{Z}$, the discrete measure is the Haar measure. 

For the last ingredient, we need to understand the Fourier kernel $e^{-t2\pi \iu\omega}$ in the new context. Strictly speaking, it is a common misconception to review $\{e^{-(\cdot)2\pi \iu\omega},\omega\in \mathbb{R}\}$ as an eigenbasis for some linear operator on a Hilbert space such as $L^2(\mathbb{R})$. This is because it is not (square) integrable. Moreover, this set of functions is uncountable. Maybe a more appropriate interpretation is to view it as a character, which we introduce next. 

\begin{Definition}
    A function $\phi: A_1 \to A_2$ from an abelian group $A_1$ to $A_2$ is called a \emph{homomorphism} if $\phi(a+b)=\phi(a)+\phi(b)$ for $a,b\in A_1$, i.e., $\phi$ preserves the addition. A bijective homomorphism is called an isomorphism.
    
    A \emph{character} of a locally compact abelian group $A$ is a continuous homomorphism $\chi: A \to S^1$. 
    The set $\widehat{A}$ of all characters of $A$ is called the \emph{Pontryagin dual} of $A$. 
\end{Definition}

Two isomorphic groups are essentially the same in the algebraic sense, while continuous homomorphism also respects the topologies. 

Apparently, $\widehat{A}$ is also an abelian group with pointwise multiplication in $S^1$. It carries the compact-open topology if $A$ is locally compact. 

\begin{Example} \label{eg:ia=}
    \begin{enumerate}
        \item If $A = \mathbb{R}$, then for each $\omega$, we have a character $\chi_{\omega}: \mathbb{R} \to S^1$ by the formula $\chi_{\omega}(t) = e^{t2\pi \iu\omega}$. On the other hand, for a character $\chi: \mathbb{R} \to S^1$, let $\chi(1) = e^{2\pi \iu \omega} \in S^1$. As $\chi$ is a homomorphism, one can easily verify that $\chi(t) = e^{t2\pi \iu \omega}$ for $t\in \mathbb{Q}$. Moreover, since $\chi$ is continuous, the same formula holds for any $t \in \mathbb{R}$, i.e., $\chi = \chi_{\omega}$. In conclusion, $\widehat{\mathbb{R}}$ can be identified with $\mathbb{R}$.  
        \item If $A = \mathbb{Z}$, then one can use the same argument to show that $\widehat{\mathbb{Z}} = \mathbb{R}/\mathbb{Z} \cong S^1$ using the fact that $e^{2\pi \iu \omega} = e^{2\pi \iu(\omega+m)}$ for any $m\in \mathbb{Z}$. Conversely, it is also true that $\widehat{S^1} = \mathbb{Z}$. 
        \item \label{it:ia=} If $A = \mathbb{Z}/n\mathbb{Z}$, then a character $\chi: A \to S^1$ is determined uniquely by $\chi(1) = e^{2\pi \iu \omega}$ with $\omega$ is any of number of the form $i/n, 0\leq i\leq n-1$. Therefore, $\widehat{A}$ can be identified with $A$. 
    \end{enumerate}
\end{Example}

The first example suggests interpreting the kernel of the Fourier transform as a character of the group $\mathbb{R}$. Inspired by this observation, we are prompted to define the Fourier transform analogously as an integral.

\begin{Definition}
    For $f\in L^1(A)$, we define its Fourier transform $\widehat{f}: \widehat{A} \to \mathbb{C}$, a function on $\widehat{A}$, by the formula 
    \begin{align*}
        \widehat{f}(\chi) = \int_{A} f(a)\overline{\chi}(a) \ud a,
    \end{align*}
    where $\chi \in \widehat{A}$ and $\ud a$ is the Haar measure on $A$. 
\end{Definition}

To be rigorous, let $V(A)$ be the complex span of continuous functions of positive type on $A$. We do not formally define this technical term here. However intuitively, the positive condition resembles that of the positive semi-definiteness of matrices. We summarize the main results as follows (cf. \cite{Ram99}), many of which can find their counterparts in the classical Fourier theory. 

\begin{Theorem}
    \begin{enumerate}
        \item (Fourier inversion) There exists a Haar measure $\ud \chi$ on $\widehat{A}$ such that for all $f\in V^1(A) = V(A)\cap L^1(A)$,
        \begin{align*}
            f(a) = \int \widehat{f}(\chi)\chi(a) \ud \chi.
        \end{align*}
        Moreover, the Fourier transform $f\mapsto \widehat{f}$ identifies $V^1(A)$ with $V^1(\widehat{A})$. 
        \item (Pontryagin duality) $A$ is self-dual, i.e., the evaluation map $\alpha: A \to \widehat{\widehat{A}}, \alpha(a)(\chi) = \chi(a)$ is a continuous isomorphism.
        \item (The Plancherel theorem) The Fourier transform can be extended to a unitary isometry of Hilbert spaces from $L^2(A)$ onto $L^2(\widehat{A})$. 
    \end{enumerate}
\end{Theorem}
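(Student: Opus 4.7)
The plan is to deduce all three assertions from a single technical keystone: \textbf{Bochner's representation theorem}, which asserts that every continuous function $\phi$ of positive type on $A$ admits a unique representation $\phi(a) = \int_{\widehat A}\chi(a)\,\mathrm{d}\mu_\phi(\chi)$ for some positive bounded Radon measure $\mu_\phi$ on $\widehat A$. I would first establish Bochner's theorem by the standard Gelfand--Naimark route: observe that $L^1(A)$ with convolution is a commutative Banach $*$-algebra (its algebra structure generalizes \cref{thm:ifi} to $A$), identify its Gelfand spectrum with $\widehat A$ via the character theory of \cref{eg:ia=}, and then apply the Riesz representation theorem to the positive linear functional $f\mapsto \int_A f\cdot\overline{\phi}\,\mathrm{d}a$ to produce $\mu_\phi$.

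Given Bochner's theorem, the inversion formula follows by decomposing $f\in V^1(A)$ into positive-type pieces, obtaining a complex measure $\mu_f$ on $\widehat A$ with $f(a)=\int\chi(a)\,\mathrm{d}\mu_f(\chi)$, and showing that as $f$ varies these measures share a common translation-invariant reference measure whose Radon--Nikodym density for $\mu_f$ is precisely $\widehat f(\chi)$. Translation invariance pins this reference measure down as a Haar measure $\mathrm{d}\chi$ on $\widehat A$; the residual scalar freedom is absorbed into the choice of normalization so that the inversion formula carries no extra constant. Running the same construction with $\widehat A$ in place of $A$ and using the tautological evaluation pairing then shows $f\mapsto \widehat f$ bijects $V^1(A)$ with $V^1(\widehat A)$. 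Plancherel is obtained by polarization: for $f,g\in V^1(A)\cap L^2(A)$, combining inversion with Fubini gives $\int_A f\overline g\,\mathrm{d}a = \int_{\widehat A}\widehat f\,\overline{\widehat g}\,\mathrm{d}\chi$, and an approximate-identity argument (convolution with positive-type bumps supported in shrinking neighborhoods of $0\in A$) shows this intersection is dense in $L^2(A)$, so continuous extension produces the desired unitary operator onto $L^2(\widehat A)$.

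For Pontryagin duality, injectivity of $\alpha\colon A\to\widehat{\widehat A}$ is immediate from inversion, since characters must then separate points of $A$; continuity of $\alpha$ is a direct unwinding of the compact-open topology on $\widehat A$. The substantive part is surjectivity: one shows that $\alpha(A)$ is a closed subgroup of $\widehat{\widehat A}$ (using that $A$ is locally compact and $\alpha$ is a topological embedding onto its image) and that no nontrivial character of $\widehat A$ vanishes on $\alpha(A)$; Plancherel applied on the dual side then forces $\alpha(A)=\widehat{\widehat A}$. The main obstacle throughout is Bochner's theorem itself, together with the delicate task of normalizing $\mathrm{d}\chi$ so that the inversion formula is constant-free; once those are in place, the remaining steps reduce to standard approximate-identity manipulations and functional analysis on commutative Banach $*$-algebras. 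A secondary subtlety worth flagging is the density of $V(A)\cap L^1(A)\cap L^2(A)$ in $L^2(A)$, which requires that convolution squares $\psi*\widetilde\psi$ of compactly supported $\psi$ lie in $V(A)$ and form an approximate identity in $L^2$.
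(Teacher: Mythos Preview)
The paper does not actually prove this theorem. It is an expository article, and the theorem is stated with the preface ``We summarize the main results as follows (cf.\ \cite{Ram99})'', i.e., it is quoted from the literature without proof. So there is no ``paper's own proof'' to compare against.

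That said, your outline is a faithful sketch of the standard textbook argument (essentially the route taken in Rudin's \emph{Fourier Analysis on Groups} and in the reference \cite{Ram99} the paper cites): Bochner via the Gelfand theory of the commutative Banach $*$-algebra $L^1(A)$, then inversion by extracting a common translation-invariant reference measure from the Bochner measures, then Plancherel by polarization and density, and finally duality from the separation property plus a closed-image argument. The one place I would tighten is the surjectivity step for $\alpha$: the phrase ``Plancherel applied on the dual side then forces $\alpha(A)=\widehat{\widehat A}$'' is doing a lot of work. The cleaner standard argument is that if $\alpha(A)$ were a proper closed subgroup of $\widehat{\widehat A}$, there would exist a nonzero $\chi_0\in\widehat A$ annihilated by every $\alpha(a)$, i.e., $\chi_0(a)=1$ for all $a\in A$, contradicting $\chi_0\neq 1$; this uses the separation of closed subgroups by characters on $\widehat A$, which in turn rests on inversion (or on the existence of sufficiently many positive-definite functions), not directly on Plancherel. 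Also, showing that $\alpha(A)$ is closed requires knowing $\alpha$ is a homeomorphism onto its image and that the image is locally compact, which deserves a line of justification. Otherwise your plan is sound and matches what the cited source does.
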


These results describe a complete picture of what we expect from a Fourier theory. In the following table, we summarize signal processing theories corresponding to different choices of $A$ and $\widehat{A}$:
\begin{center}
\begin{tabular}{ |c|c|c| } 
 \hline
 Theory & $A$ & $\widehat{A}$ \\ 
 \hline
 \hline
 Fourier transform & $\mathbb{R}$ & $\mathbb{R}$ \\ 
 \hline
 Fourier series  & $S^1$ & $\mathbb{Z}$ \\ 
 \hline
 Discrete-time Fourier transform (DTFT) & $\mathbb{Z}$ & $S^1$ \\ 
 \hline
 Discrete Fourier transform (DFT) & $\mathbb{Z}/n\mathbb{Z}$ & $\mathbb{Z}/n\mathbb{Z}$ \\ 
 \hline
\end{tabular}
\end{center}

It is worth mentioning that the framework has many other important applications such as in number theory. A detailed discussion is beyond the scope of the paper and interested readers may consult \cite{Tat65, Ram99}.

With all the hard work that has laid down the foundation, introducing convolution is just a matter of formality. For $f,g\in L^1(A)$, their \emph{convolution} is defined as 
\begin{align} \label{eq:fga}
    f*g(a) = \int g(a-b)f(b) \ud b. 
\end{align}
The convolution $f*g$ exists for almost all $a\in A$ and $f*g \in L^1(A)$ satisfying $\norm{f*g}_1 \leq \norm{f}_1\norm{g}_1$. If $f$ and $g$ are sufficiently regular, then the convolution theorem holds: $\widehat{f*g} = \widehat{f}\cdot \widehat{g}$.

\begin{Corollary} \label{coro:lia}
    $L^1(A)$ is a commutative and associative (Banach) algebra with convolution $*$ as the multiplication. 
\end{Corollary}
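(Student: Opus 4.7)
The plan is to verify the remaining algebraic properties directly from the integral formula \cref{eq:fga}, since the paragraph preceding the corollary already supplies what is needed analytically: namely, that $f*g$ is defined almost everywhere, lies in $L^1(A)$, and satisfies $\norm{f*g}_1\leq \norm{f}_1\norm{g}_1$. Bilinearity of $*$ is immediate from the linearity of the Haar integral, and combined with the submultiplicative bound and the completeness of $L^1(A)$ under $\norm{\cdot}_1$, this upgrades the ring structure to a Banach algebra structure as soon as commutativity and associativity are established. So the remaining content is purely algebraic.

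For commutativity, starting from $f*g(a)=\int g(a-b)f(b)\ud b$ I would substitute $c=a-b$. The change of variables is justified because the Haar measure on the locally compact abelian group $A$ is translation invariant (by definition) and invariant under inversion $b\mapsto -b$; the latter holds because $A$ is abelian, so the inverse map is a continuous automorphism of $A$ and pulls the Haar measure back to itself up to a positive constant that must be $1$ by uniqueness and evaluation on any symmetric neighborhood of $0$. The integral then becomes $\int g(c)f(a-c)\ud c=g*f(a)$.

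For associativity, I would apply Fubini's theorem to the double integral
\begin{align*}
((f*g)*h)(a) = \int h(a-c)\int g(c-b) f(b)\ud b\ud c = \int f(b)\int g(c-b) h(a-c)\ud c\ud b,
\end{align*}
and then substitute $c'=c-b$ in the inner integral (translation invariance once more) to recognize it as $(g*h)(a-b)$, so the outer integral reads $\int (g*h)(a-b)f(b)\ud b = f*(g*h)(a)$. The use of Fubini is legitimate because the iterated integral of $|f(b)||g(c-b)||h(a-c)|$ is finite for almost every $a$ by the same submultiplicative $L^1$-bound applied to $|f|,|g|,|h|$. This measure-theoretic justification is the one nontrivial step of the argument and is where the submultiplicative bound does real work; everything else is formal manipulation given the Pontryagin-theoretic prerequisites on $A$.
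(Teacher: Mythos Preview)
Your argument is correct and is the standard measure-theoretic proof: commutativity via the substitution $c=a-b$ together with inversion invariance of Haar measure on an abelian group, and associativity via Fubini justified by the submultiplicative $L^1$ bound.

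The paper, however, does not actually write out a proof. The result is stated as a \emph{Corollary} immediately after the sentence ``If $f$ and $g$ are sufficiently regular, then the convolution theorem holds: $\widehat{f*g}=\widehat f\cdot\widehat g$,'' so the intended justification is presumably the transform-side one: pointwise multiplication of $\widehat f,\widehat g$ is commutative and associative, and this transfers back to $*$ via injectivity of the Fourier transform on $L^1(A)$. Your route is genuinely different in that it never touches the Fourier transform. What this buys you is that it covers all of $L^1(A)$ directly, without the paper's ``sufficiently regular'' caveat and without having to invoke injectivity of $\widehat{\,\cdot\,}$ on $L^1$; the price is the explicit Fubini and change-of-variables bookkeeping, which you have handled correctly.
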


\section{DFT, the directed cycle graph and GSP} \label{sec:dtd}

As we have seen, a special case of the Pontryagin theory is DFT, when $A = \widehat{A} = \mathbb{Z}/n\mathbb{Z}$. In this case, $A$ is a finite set and hence all the function spaces on $A$ of interest can be identified $\mathbb{C}^n$. Hence, linear operators such as the Fourier transform and convolutions can be represented by matrices. For example, as we have seen in \cref{eg:ia=}, the characters of $A$ are $e^{2\pi \iu j/n}, 0\leq j\leq n-1$. Therefore, the Fourier transform can be represented by the matrix $\bU$ with the $i,j$-th entry $\bU_{i,j} = e^{i2\pi \iu j/n}$. 

For the cyclic group $A$, we may also visualize it combinatorically. It can be generated by the element $1$. The Cayley graph (\cite{Cay78}) $C_n$ associated with this generator is the directed cycle graph of size $n$, i.e., there is a directed edge from the node labeled by $i$ to the node labeled $i+1$ in $A$. The matrix of eigenvectors of the adjacency matrix $\bA_{C_n}$ of $C_n$ is nothing but $\bU$. Therefore, if we interpret a vector in $\mathbb{C}^n$ as functions, or a signal, on the nodes of $C_n$, then the DFT Fourier transform can be viewed as the linear transformation of the signal space by the matrix of eigenvectors of $\bA_{C_n}$. According to the convolution theorem, convolution can be defined by componentwise multiplication of the transformed signals.  

If we extract the essential ingredients of the above combinatoric approach, we need
\begin{itemize}
    \item a finite graph $G=(V,E)$ of size $|V|=n$
    \item an operator (matrix) $\bS$ associated with $G$ such that $\bS$ has an orthonormal (or more generally, unitary) eigenbasis.
\end{itemize}
The operator $\bS$ usually imitates a message passing on $G$, hence called a \emph{graph shift operator}. Typical examples of $\bS$ include the adjacency matrix of $G$, the Laplacian matrix of $G$, or their normalized version. The theory of graph signal processing (GSP) \cite{Shu12, Shu13, San13, San14, Gad14, Don16, Def16, Kip16, Egi17, Gra18, Ort18, Girault2018, Ji19} can be built upon such an $\bS$. As we deal with finite graphs, we do not need to worry about integrability. Therefore, it is usually sufficient to use tools from linear algebra.  

For the formal description of GSP, by the assumption, $\bS$ has a decomposition $\bS = \bU\bLambda \bU^{-1}$. A graph signal is a function $V \to \mathbb{C}$. As $V$ is finite, such a signal can be identified with a vector $\bx$ in $\mathbb{C}^n$. Each component of $\bx$ corresponds to a node $v\in V$. Therefore, the domain of $\bx$ is called the graph domain. Usually, a linear transformation of the space of graph signals is also called a \emph{filter}. Here we consider complex signals inconsistent with the classical Fourier theory, though most GSP literature and applications use real signals. 

The graph Fourier transform (GFT) of a graph signal $\bx$ is 
\begin{align*}
    \widehat{\bx} = \bU^{-1}\bx.
\end{align*}
Its inverse is given by the inverse graph Fourier transform (IGFT)
\begin{align*}
    \bx = \bU\widehat{\bx}.
\end{align*}
The codomain of GFT is called the frequency domain. Each component of the frequency domain corresponds to an eigenvector $\bu_i$ of $\bS$. The vector $\bu_i$ is also a graph signal and intuitively, it corresponds to a certain mode of signal fluctuation on $G$. 

To introduce convolution, it is most convenient to take the analogy of the equivalent formulation in the classical convolution theorem \cref{thm:ifi}. More precisely, for two graph signals $\bx,\by$, then their convolution $\bz = \bx*\by$ is determined by 
\begin{align} \label{eg:wzw}
    \widehat{\bz} = \widehat{\bx}\odot \widehat{\by},
\end{align}
where $\odot$ is the Hadamard product or component-wise multiplication. 

If $\bx$ is fixed, then taking convolution with it defines a linear transformation, also called a \emph{convolution filter}, on the signal space: $\bx*: \mathbb{C}^n \to \mathbb{C}^n, \by \mapsto \bx*\by$. The convolution theorem for GSP reads as follows.
\begin{Theorem}
    If $\bS$ does not have repeated eigenvalues, then for each convolution filter $\bx*$, there is a unique polynomial $P_{\bx}(\cdot)$ of degree at most $n-1$ such that the matrix form of $\bx*$ is $P_{\bx}(\bS)$. Therefore, the set of convolution filters agrees with the polynomial algebra of matrices in $\bS$.   

    Moreover, a filter $\bM$ is shift-invariant w.r.t.\ $\bS$, i.e., $\bM\circ \bS = \bS \circ \bM$, if and only if it is a convolution filter.  
\end{Theorem}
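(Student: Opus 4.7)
The plan is to work entirely in the eigenbasis of $\bS$ and reduce every claim to an elementary fact about diagonal matrices. First I would translate the defining identity $\widehat{\bz}=\widehat{\bx}\odot\widehat{\by}$ into an explicit matrix representation. Since $\widehat{\bv}=\bU^{-1}\bv$, Hadamard product by $\widehat{\bx}$ in the frequency domain corresponds to left-multiplication by the diagonal matrix $\bD_\bx:=\diag(\widehat{\bx})$. Hence the convolution filter $\bx*$ acts on $\by$ by
\begin{align*}
\bx*\by \;=\; \bU\bD_\bx\bU^{-1}\by,
\end{align*}
so in matrix form $\bx*=\bU\bD_\bx\bU^{-1}$. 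This single observation drives everything.

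Next I would establish the polynomial representation. Let $\lambda_1,\dots,\lambda_n$ be the (distinct) eigenvalues of $\bS$ sitting on the diagonal of $\bLambda$. By Lagrange interpolation, for any prescribed values $d_1,\dots,d_n\in\mathbb{C}$ there is a unique polynomial $p$ of degree at most $n-1$ with $p(\lambda_i)=d_i$. Applying this with $d_i=(\widehat{\bx})_i$ produces $P_\bx$, and since $p(\bS)=\bU\,p(\bLambda)\,\bU^{-1}=\bU\bD_\bx\bU^{-1}$, we obtain $\bx*=P_\bx(\bS)$. Uniqueness follows because $\bS$ has $n$ distinct eigenvalues, so its minimal polynomial has degree $n$, and the evaluation map $p\mapsto p(\bS)$ is injective on polynomials of degree $<n$. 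Claim 2 is then immediate: every $p(\bS)=\bU\diag\bigl(p(\lambda_1),\dots,p(\lambda_n)\bigr)\bU^{-1}$ is realized as a convolution filter by taking $\bx=\bU(p(\lambda_1),\dots,p(\lambda_n))\T$, and conversely every convolution filter is a polynomial in $\bS$ by the first part.

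For the shift-invariance characterization, the easy direction is that any polynomial in $\bS$ commutes with $\bS$. For the converse, suppose $\bM\bS=\bS\bM$ and write $\widetilde{\bM}:=\bU^{-1}\bM\bU$. Then $\widetilde{\bM}\bLambda=\bLambda\widetilde{\bM}$, and comparing entries gives $(\lambda_j-\lambda_i)\widetilde{\bM}_{ij}=0$; distinctness of the $\lambda_i$ forces $\widetilde{\bM}$ to be diagonal, so $\bM=\bU\widetilde{\bM}\bU^{-1}$ has the form produced in the previous step and is therefore a convolution filter. The only real obstacle is keeping straight where distinctness of the spectrum is used; it enters twice, once to guarantee Lagrange interpolation recovers every diagonal matrix of frequency responses, and once to pin down the commutant of $\bS$ to the diagonal matrices in the eigenbasis. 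Everything else is bookkeeping in the chosen basis.
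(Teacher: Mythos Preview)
Your argument is correct and is the standard route: pass to the eigenbasis, identify the convolution filter with $\bU\diag(\widehat{\bx})\bU^{-1}$, use Lagrange interpolation over the distinct eigenvalues to realize any diagonal as $p(\bLambda)$, and characterize the commutant of a diagonalizable operator with simple spectrum as the diagonal matrices in that basis. The paper itself states this theorem without proof, treating it as a known GSP fact, so there is nothing to compare against; your write-up would serve perfectly well as the missing justification. One small stylistic point: in the uniqueness step you could simply say two polynomials of degree $\le n-1$ agreeing at $n$ distinct points are equal, rather than invoking the minimal polynomial, but either way is fine.
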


This result is useful as in theory, it gives yet another point-of-view of the notion of convolution in the graph context. In practice, to perform convolution, explicit eigendecomposition is not needed and one only has to compute polynomials in $\bS$. This also facilitates learning and estimation as a parameterized model. Many interesting filters belong to the convolution family including the band-pass filters. Moreover, they are extensively used in graph neural networks to be discussed in the next section. 

In this section, we consider $\mathbb{C}$ as the base field for the connection with the previous sections. In most applications, it is sufficient to use $\mathbb{R}$ as the base field, which will be assumed for subsequent sections unless otherwise stated.  

\section{CNN v.s.\ GNN} \label{sec:cg}
The term ``convolution'' is omnipresent in the era of artificial intelligence \cite{Lec98, Goo16, He16, Kri17, Def16}. In this section, we compare its usage in CNN for the field of computer vision and GNN for graph neural networks. 

For CNN, we shall only discuss the fundamental architecture eyeing to explain the meaning of convolution (based on \cite{Lec98}). We model an image by a compactly supported function on the (infinite) $2$D-lattice $A = \mathbb{Z}\oplus \mathbb{Z}$ as follows. If the size of the image is $w\times h$, we fit the lower left corner of the image with the origin and each pixel corresponds to a node on the integer lattice. (as illustrated in \figref{fig:cnn}). We construct a function $f$ on $A$ as follows: $f(a,b)$ is the pixel value at the $(a+1,b+1)$-th position of the image if $0\leq a\leq w-1, 0\leq b\leq h-1$; and otherwise, $f(a,b)=0$. 
\begin{figure}
    \centering
    \includegraphics[scale=0.6]{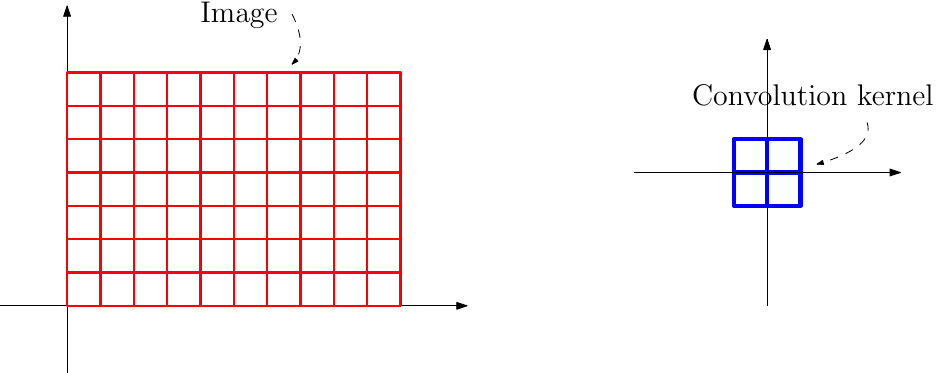}
    \caption{The setup for the basic CNN architecture.}
    \label{fig:cnn}
\end{figure}

On the other hand, for the proto-typical example of a $3\times 3$ convolution kernel, we can model it by any function $g$ on $A$ supported on the $9$ lattice points $\{-1,0,1\}\times \{-1,0,1\}$ (as illustrated in \figref{fig:cnn}). Notice that $A$ is a locally compact abelian group. Then the convolution of the image in CNN with stride $1$ is exactly $f*g$ given by the formula (\ref{eq:fga}) in \cref{sec:pd}. 

Encouraged by the success of convolution in computer vision, it is natural to anticipate similar architectures for graph-structured data. However, there is the major challenge that for a general graph $G$, unlike the 2$D$-lattice, each node may have completely different neighborhood structures, for example, different nodes usually have different degrees. The community quickly came to the consensus that convolution from GSP (\cref{sec:dtd}) should be the ``correct counterpart'' to design graph neural networks. As a natural consequence, the following GCN model \cite{Def16} is proposed, which serves as the prototype for most of the more sophisticated architectures \cite{Kip17, Ham17, Vel18, Cha19, Gul19}. 

For an undirected graph $G$, fix a graph shift operator $\bS$, such as the Laplacian or the normalized Laplacian of $G$ with self-loop. Suppose $\bX$ is a feature matrix such that the $i$-th row of $\bX$ is the feature vector associated with the $i$-th node. Set $\bW$ to be a learnable re-scaling matrix and $P$ to be a polynomial with learnable coefficients. Then the building block of the GCN model is the following graph convolution layer 
\begin{align*}
    \theta(\bX) = P(\bS)\bX\bW. 
\end{align*}

Rather than discussing various improvements of this fundamental model, which is not the aim of this paper, we want to compare ``convolution'' in CNN and GNN. For simplicity, we only consider information aggregation from the immediate neighbors of each node and there is no feature scaling. Then for CNN, we use a $3\times 3$ convolution filter, while for GNN, $P$ has degree $1$. An apparent difference is that the former has $9$ degrees of freedom, while the latter has only $2$ degrees of freedom. The implicit abelian group structure of a $2$D-lattice makes the neighborhood of each node identical. More generally, each node can be viewed as the center of the lattice. Therefore, the convolution for CNN exploits this property such that the kernel can be both local and sufficiently expressive. While for GNN, due to irregular neighborhood structures, the convolution compromises by assigning the same weight to the contribution of every neighborhood of all the nodes. Intuitively, the convolution for CNN is ``directional'' while that of GNN is ``radial''. 

\section{Algebraic signal processing} \label{sec:asp}
GSP has the simplicity that the entire theory can be established given a single operator $\bS$. The graph structure is only used once to construct such a matrix and the rest of GSP follows a formal procedure in (linear) algebra. The essential ingredients of the algebraic approach are reorganized to formulate the theory of algebraic signal processing (ASP) \cite{Pus08}, which founds applications outside the realm of graphs. 

We remark that strictly speaking, it is inaccurate to claim that ASP is derived from GSP. In the literature, ASP was proposed before the formal introduction of GSP. However, it is more natural to view ASP as a generalization of GSP as we discussed above. 

To present ASP, we first introduce a few algebraic concepts.
\begin{Definition}
A \emph{ring} $R$ is a set with two binary operations $+,\times$ with the following properties
\begin{enumerate}
    \item $R$ is an abelian group with $+$ with the identity $0$. 
    \item The multiplication $\times$ on $R$ isassociative. By convention, we may also denote $\times$ by $\cdot$ or omit it if no confusion arises.
    \item The multiplication $\times$ is distributive w.r.t.\ $+$, i.e., $r_1\cdot (r_2 + r_3) = r_1\cdot r_2 + r_1\cdot r_3$ and $(r_2 + r_3)\cdot r_1 = r_2\cdot r_1 + r_3\cdot r_1$.
\end{enumerate}   
$R$ is called commutative if $\times$ is commutative. It is called \emph{unital} if there is a multiplicative identity $1$ such that $1\cdot r = r\cdot 1 = r$ for any $r\in R$.
\end{Definition}

\begin{Example} \label{eg:auc}
    \begin{enumerate}
        \item A unital commutative ring $R$ is called a \emph{field} if $R\backslash \{0\}$ is also an abelian group under $\times$. For example, $\mathbb{R}$ and $\mathbb{C}$ are both fields, while the ring of integers $\mathbb{Z}$ is not a field as elements such as $2$ are not invertible in $\mathbb{Z}$. 
        \item \label{it:tso} The set of $n\times n$ matrices $M_n(\mathbb{R})$ is a unital ring under the usual matrix addition and multiplication. However, matrix multiplication is not commutative. 
It can be identified with the \emph{endomorphism ring} $\text{End}(\mathbb{R}^n)$ of linear transformation from $\mathbb{R}^n$ to itself, with composition as the multiplication. 
        \item Let $R$ be a commutative ring such as $\mathbb{Z},\mathbb{R}$ and $\mathbb{C}$. Then the set of polynomials $R[t]$ with coefficients in $R$ forms the (unital) \emph{polynomial ring}.
        \item \label{it:tsoi} The set of integrable functions $L^1(\mathbb{R})$ is a commutative ring with the usual function addition and convolution $*$ as the multiplication. However, it is not unital, while only has an approximate identity. 
    \end{enumerate}
\end{Example}

While the notion of ring generalizes that of field such as $\mathbb{R}, \mathbb{C}$, we have a corresponding generalization of the notion of vector space.

\begin{Definition}
    Given a ring $R$, an abelian group $M$ is a \emph{module over $R$} if there is a scalar multiplication $\cdot: R\times M \to M$ (the symbol ``$\cdot$'' is often omitted) such that the following holds for $r,r'\in R$ and $x,y\in M$
\begin{enumerate}
    \item $r\cdot (x+y) = r\cdot x + r\cdot y$.
    \item $(r+r')\cdot x = r\cdot x + r'\cdot x$.
    \item $(rr')\cdot x = r\cdot (r'\cdot x)$.
    \item $1\cdot x = x$.
\end{enumerate}
A subset $M'$ of $M$ is called a submodule over $M$ if $rx\in M'$ for any $x\in M'$. $M$ is irreducible if the only submodule of $M$ are the obvious ones $\{0\}$ and $M$.
\end{Definition}

\begin{Definition}
    If we have two modules $M_1, M_2$ over $R$, there \emph{direct sum} $M_1\oplus M_2$ consists of pairs $(x_1,x_2), x_1\in M_1, x_2\in M_2$. It is a also a module over $R$ with $r(x_1,x_2) = (rx_1,rx_2)$.
\end{Definition}

\begin{Example} \label{eg:tfs}
\begin{enumerate}
    \item \label{it:tvs} The vector space $\mathbb{R}^n$ is a module over $\mathbb{R}$. Moreover, it is also a module over the matrix ring $M_n(\mathbb{R})$. Fix any matrix $\bM \in M_n(\mathbb{R})$, then $\mathbb{R}^n$ is a module over the polynomial ring $\mathbb{R}[t]$ with $P(t)\cdot \bx= P(\bM)\bx$ for any polynomial $P(t)$ and vector $\bx$. If the normal form of $\bM$ has a Jordan block of size $m$ with eigenvalue $\lambda$, then $\mathbb{R}^n$ has a $m$-dimenisonal submodule over $\mathbb{R}[t]$. In general, $\mathbb{R}^n$ is the direct sum of irreducible submodules, each corresponding to a Jordan block of the normal form of $\bM$.
    \item For a ring $R$, it is a module over itself with module scalar multiplication the same as the ring multiplication. A submodule $I\subset R$ is called an \emph{ideal} of $R$. For example, the even numbers $\{\ldots, -4, -2, 0, 2,4,\ldots\}$ is an ideal of the integer ring $\mathbb{Z}$. 
    
    If $R$ is commutative, an ideal $\mathfrak{p}$ is \emph{prime} if it is not the product of different ideals other than $0$ and $R$. For example, in $\mathbb{Z}$, a prime ideal takes the form $\{pm \mid m\in \mathbb{Z}\}$ for some prime number $p$. In commutative algebra and algebraic geometry, the set of prime ideals is called the \emph{spectrum} of $R$ \cite{Ati94}. 
    \end{enumerate}
\end{Example}

In each of \cref{eg:auc} \ref{it:tso})-\ref{it:tsoi}), there is a hidden base ring $R$. This gives an additional structure to either example, which we formalize as follows.

\begin{Definition}
    Let $R$ be a commutative ring (usually chosen as a field such as $\mathbb{R}$ or $\mathbb{C}$). A ring $A$ is an $R$-algebra or an algebra over $R$ if $A$ is a module over $R$ such that
    $(ra)a' = r(aa')$ for any $r\in R, a, a'\in A$. 
\end{Definition}

If we have the same algebraic structure such as ring, module, or algebra on sets $S_1, S_2$, then a function $\phi: S_1 \to S_2$ is a \emph{homomorphism} if it preserves the respective algebraic structure. It is an \emph{isomorphism} if it is invertible, which essentially identifies $S_1$ and $S_2$.   

Having introduced all the necessary concepts, we can describe ASP. It requires the data $(\calA, \calM, \rho)$, where $\calA$ is an unital algebra (over a base field), $\calM$ is a vector space, and $\rho: \calA \to \text{End}(\calM)$ is an algebra homomorphism of $\calA$ into the endomorphism ring of $\calM$ (cf.\ \cref{eg:auc}~\ref{it:tso})). The map $\rho$ makes $\calM$ an $\calA$-module with 
\begin{align*}
    a\cdot x = \rho(a)x, a\in \calA, x\in \calM.
\end{align*}
It is usually assumed that $\calM$ is finitely generated
as an $\calA$-module.
\begin{Example} \label{eg:ite}
 In this example, we explain how GSP fits into the ASP framework. Consider a graph $G$ of size $n$ and $\bS$ a chosen graph shift operator (GSO). Similar to \cref{eg:tfs}~\ref{it:tvs}), let $\calA$ be the polynomial ring $\mathbb{R}[t]$ in a single variable $t$ and $\calM$ be the signal space $\mathbb{R}^n$ on $G$. Then the endomorphism ring $\text{End}(\calM)$ is nothing but the matrix ring $M_n(\mathbb{R})$. The map $\rho: \mathbb{R}[t] \to M_n(\mathbb{R}), t\mapsto \bS$ is the setup for graph signal processing (GSP). It is worth pointing out that the ring $\calA$ is a principal ideal domain, whose structure theorem says that $\calM$ can be decomposed into $1$-dimenision irreducible submodules \cite{Art11}.   
\end{Example}

ASP has the following interpretations of the key concepts we are interested in:
\begin{itemize}
    \item In ASP, the \emph{Fourier transform} is a decomposition of $\calM$ into a direct sum of ($1$-dimensional) irreducible submodules $\Delta: \calM \to \oplus_{w\in W}\calM_w$, where $W$ is the index set that also corresponds to the coordinates of the \emph{frequency domain}. 
    \item In ASP, a \emph{convolution} is $\rho(a) \in \text{End}(\calM), a\in \calA$. This corresponds to the property that, in GSP, a convolution is a polynomial in the generators of $\calA$.
    \item In ASP, a bandlimited space is a submodule of $\calM$, which is isomorphic to the direct sum of irreducible submodules.
\end{itemize}

The framework places the algebra $\rho(\mathcal{A})$ at the center of the stage. As we have seen, in GSP, $\mathcal{A} = \mathbb{R}[t]$ and $\rho(t) = \bS$ for a chosen shift operator. In contrast, we illustrate how this idea can be applied for lattice signal processing \cite{Pus21} without using a single shift operator. 

\begin{Example}
We describe lattice signal processing in this example. Instead of giving all the details, we explain how the shifts and the associated matrix algebra are defined, which is sufficient to apply the ASP framework. 

Let $\calL$ be a \emph{meet semilattice} of size $n$. This means that $\calL$ has a partial order $\leq$ that ``compares'' elements of $\calL$, with the following properties
\begin{enumerate}
    \item $a\leq a$ for any $a\in \calL$.
    \item If $a\leq b, b\leq a$, then $a=b$.
    \item If $a\leq b, b\leq c$, then $a\leq c$. 
    \item For $a,b\in \calL$, there is a unique greatest lower bound (in terms of $\leq$) $a \wedge b$, called the \emph{meet}. 
\end{enumerate}
A signal $\bs = \{s_a,a\in \calL\}$ assigns a number $s_a$ to each $a\in \calL$. For each $a\in \calL$, we define a shift $\bT_a$ by requiring
\begin{align*}
    \bT_a(\bs) = (s_{b\wedge a})_{b\in \calL}. 
\end{align*}
Intuitively, the signal on each $b\in \calL$ is ``shifted'' to its meet with $a$. 

One verifies that $\{\bT_a,a\in\calL\}$ are pairwise commutative as $\bT_a\circ \bT_{a'} = \bT_{a\wedge a'}=\bT_{a'\wedge a}$. The shifts have a common eigenbasis. By taking compositions and linear combinations in $M_n(\mathbb{R})$, they form the required algebra $\calA$ (with $\rho$ the identity map). As an application, the theory can be used for sampling in auction \cite{Pus21}.
\end{Example}

\section{Towards a unification} \label{sec:au}

So far, we have briefly described two major approaches to having a ``Fourier theory'', namely, the Pontryagin theory and the theory of ASP. We summarize their major differences in the following table. 

\begin{table}[htbp]
\caption{The difference between the Pontryagin theory and the theory of ASP.} \label{tab:tdb}
\begin{center}
\scalebox{1.2}{
\begin{tabular}{ |c|c|c| } 
 \hline
   & Pontryagin & ASP \\ 
 \hline
 \hline
 Signal space & Infinite dimensional & Finite dimensional \\ 
 \hline
 Domain symmetry  & Group & Irregular \\ 
 \hline
 Fourier kernel & Characters & Vector space basis \\ 
 \hline 
 Measure & Haar & ``Not required'' \\
 \hline
 Tools & Algebra, topology, analysis & Algebra, combinatorics \\ 
 \hline
\end{tabular}}
\end{center}
\end{table}

Technically, the Pontryagin theory and ASP are loosely hinged on DFT, which can be interpreted with both a finite cyclic group and a finite directed cycle graph. The relations among them and their specializations can be visualized with a Venn diagram \figref{fig:relations}.   

\begin{figure}[h]
    \centering
    \includegraphics[width=0.5\textwidth]{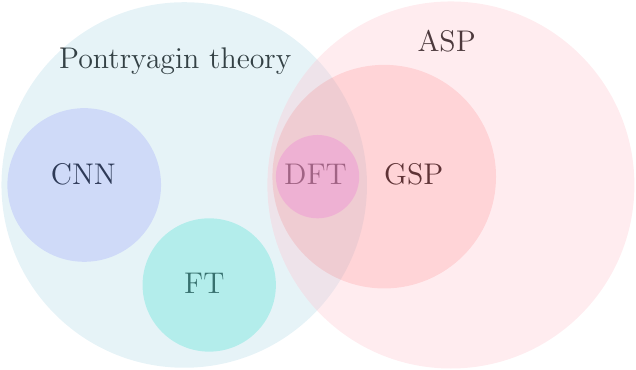}
    \caption{Relations among the theories discussed.}
    \label{fig:relations}
\end{figure}

However, intuitively, any Fourier theory intends to analyze a signal by inspecting its ``response'' to different frequency modes. Hence, there should be a certain point of view that we can (partially) unify all the theories technically. The classical Fourier theory starts with the Fourier transform. While convolution can be introduced independently, it is usually associated with Fourier transform as multiplication in the frequency domain. ASP turns the table around by starting with a module structure of a vector space $\calM$ over an algebra $\calA$. The frequency domain is subsequently introduced as the index set of a module decomposition. Inspired by this and a classical result of Gelfand, we propose to understand the theories in terms of the convolution algebra of the signal space as follows. 

For simplicity, we consider mainly the Pontryagin theory and GSP. If $A$ is a locally compact abelian group, we consider the signal space $\calS = L^1(A)$. On the other hand, for GSP, we consider the signal space $\calS = \mathbb{R}^n$. In the former case, $\calS$ is a Banach algebra under convolution as in \cref{coro:lia}. For GSP, we define an algebra structure $\calS = \mathbb{R}^n$, called the convolution algebra, with multiplication given by the convolution $*$ as in (\ref{eg:wzw}). 

A \emph{character} of $\calS$ is a continuous algebra homomorphism from $\calS$ to the base field, which is $\mathbb{C}$ and $\mathbb{R}$ for the Pontryagin theory and GSP respectively. The set of nonzero characters is denoted by $\widehat{\calS}$. In theory, $\widehat{\calS}$ is uniquely determined by $\calS$. Furthermore, $\calS$ determines the Fourier transform in the following sense. 

\begin{Theorem} \label{thm:tia}
    There is a bijection between $\widehat{\calS}$ and the Fourier kernel. 
\end{Theorem}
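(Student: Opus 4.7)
The plan is to handle the two settings (Pontryagin and GSP) in parallel, since in each case the content is that characters of the convolution algebra $\calS$ are forced to be ``evaluation against a Fourier kernel''. The convolution theorem already shows that each Fourier kernel gives rise to an algebra character, so the substance lies in proving that every character arises this way.

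For the GSP case I would first observe that, by the convolution theorem \cref{eg:wzw}, the GFT map $\bx\mapsto \widehat{\bx}$ is an algebra isomorphism from $(\mathbb{R}^n,*)$ onto $(\mathbb{R}^n,\odot)$, the ring of componentwise multiplication. The latter is the product ring $\mathbb{R}\times \cdots \times \mathbb{R}$, whose nonzero algebra homomorphisms to $\mathbb{R}$ are exactly the $n$ coordinate projections $\pi_i$. This is a short idempotent argument: any such homomorphism must send each canonical idempotent $\be_i$ either to $0$ or $1$, and by multiplicativity and nonvanishing at most one, and hence exactly one, can go to $1$. Pulling back by the GFT, $\pi_i\circ \widehat{\,\cdot\,}$ is the functional $\bx\mapsto \bu_i\T\bx$, i.e., evaluation against the $i$-th eigenvector $\bu_i$ of $\bS$. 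This identifies $\widehat{\calS}$ with the Fourier kernel $\{\bu_1,\ldots,\bu_n\}$.

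For the Pontryagin case the forward direction is direct: for each $\chi\in\widehat{A}$, define $\phi_\chi(f)=\widehat{f}(\chi)=\int f(a)\overline{\chi}(a)\,\ud a$. Linearity is obvious, the convolution theorem gives $\phi_\chi(f*g)=\phi_\chi(f)\phi_\chi(g)$, and the trivial bound $|\phi_\chi(f)|\le \norm{f}_1$ gives continuity; nonvanishing follows by testing against an approximate identity. For the converse, given a nonzero continuous algebra homomorphism $\phi\colon L^1(A)\to\mathbb{C}$, the plan is to represent $\phi$ via Riesz duality as $\phi(f)=\int f(a)\overline{\chi}(a)\,\ud a$ for some essentially bounded measurable $\chi$, and then exploit the interaction of $\phi$ with the translation operator $L_bf(a)=f(a-b)$. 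Using $(L_bf)*g=L_b(f*g)$ together with the multiplicativity of $\phi$, one deduces that $\phi(L_bf)=\chi(b)\phi(f)$ for all $b$, which forces $\chi(a+b)=\chi(a)\chi(b)$ almost everywhere; a standard convolution-smoothing argument upgrades $\chi$ to an actual continuous homomorphism $A\to S^1$, i.e., an element of $\widehat{A}$.

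The hard part will be precisely this last step in the Pontryagin case: going from an abstract functional on $L^1(A)$ to an honest continuous character on $A$. This is the classical identification of the Gelfand spectrum of $L^1(A)$ with $\widehat{A}$, and the delicate piece is the regularization that turns an almost-everywhere homomorphism into a genuinely continuous one, which relies on the translation invariance of the Haar measure. Once that is in place, in both settings one checks the two constructions are mutual inverses, yielding the desired bijection $\widehat{\calS}\leftrightarrow$ Fourier kernel.
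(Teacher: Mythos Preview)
Your proposal is correct and matches the paper's approach. For GSP, the paper runs the same idempotent argument directly in $(\mathbb{R}^n,*)$ using the relations $\bu_i*\bu_j=\delta_{ij}\bu_i$ (equivalent to your $\be_i\odot\be_j=\delta_{ij}\be_i$ after conjugating by the GFT), and for the Pontryagin case the paper simply cites the classical identification of the Gelfand spectrum of $L^1(A)$ with $\widehat{A}$, which is exactly the argument you sketch.
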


\begin{proof}
    If $A$ is a locally compact abelian group and $\chi \in \widehat{A}$, we define a character $\nu_{\chi}: \calS \to \mathbb{C}$ by the formula:
    \begin{align*}
        \nu_{\chi}(f) = \widehat{f}(\chi)=\int_A f(a)\overline{\chi}(a) \ud a. 
    \end{align*}
This defines a bijection between $\widehat{A}$ and $\widehat{S}$ according to \cite{Lax02} p.218 or \cite{Ram99} 3-11.

For GSP, recall the Fourier transform is given by a base change $\bU: \mathbb{R}^n \to \mathbb{R}^n$, where $\bU$ is an orthogonal matrix. The algebra structure on $\calS$ can be explicitly expressed as $\bU\big(\bU^{-1}(\bx)\odot\bU^{-1}(\by)\big)$. It suffices to show that the columns $\bu_i, 1\leq i\leq n$ of $\bU$ are uniquely determined by $\widehat{\calS}$. 

For each $\bu_i$, we associate it with $\nu_i \in \widehat{\calS}$ by the formula $\nu_i(\bx) = \langle \bu_i,\bx\rangle$. As it is just the $i$-th component of GFT, $\nu_i$ respects the mutliplication in $\calS$ and hence $\nu_i \in \widehat{\calS}$. 

In the reverse direction, for $\nu \in \widehat{\calS}$, it is a linear transformation $\calS \to \mathbb{R}$. Therefore, there is a nonzero vector $\bx_{\nu}$ such that $\nu(\by) = \langle \bx_{\nu}, \by\rangle$ for any $\by \in \calS$. Moreover, $\nu$ respects the multiplication of $\calS$, and hence for $\bu_i,\bu_j, 1\leq i,j\leq n$, we have 
\begin{align} \label{eq:lbb}
    \nu(\bu_i)\nu(\bu_j)=\langle \bx_{\nu},\bu_i\rangle\langle \bx_{\nu},\bu_j\rangle = \langle \bx_{\nu},\bu_i*\bu_j\rangle = \nu(\bu_i*\bu_j). 
\end{align}
Notice that $\bu_i*\bu_j=\bu_i$ if $i= j$ and $0$ otherwise. Therefore, for some $1\leq i_{\nu}\leq n$, $\langle \bx_{\nu},\bu_{i_{\nu}}\rangle \neq 0$ and $\langle \bx_{\nu},\bu_{i_{\nu}}\rangle^2= \langle \bx_{\nu},\bu_{i_{\nu}}\rangle$. We have $\langle \bx_{\nu},\bu_{i_{\nu}}\rangle = 1$. Moreover, for $j \neq i_{\nu}$, (\ref{eq:lbb}) implies that 
\begin{align*}
   \langle \bx_{\nu},\bu_j\rangle= \langle \bx_{\nu},\bu_{i_{\nu}}\rangle\langle \bx_{\nu},\bu_j\rangle = \langle \bx_{\nu},\bu_{i_{\nu}}*\bu_j\rangle = 0. 
\end{align*}
Hence, $\langle \bx_{\nu},\bu_j\rangle=0$ for any $j\neq i_{\nu}$. Consequently, $\x_{\nu}= \bu_{i_{\nu}}$. This implies that $\nu$ determines a unique column of $\bU$ as claimed.  
\end{proof}

This observation demonstrates that convolution can indeed be unifyingly viewed as an indispensable component of Fourier theories with both finite and infinite domains. We end the paper with an example to illustrate this point of view.

\begin{Example}
For simplicity, let $T$ be a finite (parameter) set of size $m$ and $V$ be a set of $n$ nodes. As both $T$ and $V$ are finite, the space of (real) integrable functions on $T\times V$ can be identified with $\calS = \mathbb{R}^{mn}$. Each signal in $\calS$ can be written as an $n\times m$ matrix, with the $t$-th column corresponding to the parameter $t\in T$ assuming $T$ is given an order.

Assume that for each $t\in T$, there is a symmetric graph shift operator $\bS_t$ with an orthogonal matrix of eigenvectors $\bU_t$. This can happen if there exist multiple feasible connections among $V$, e.g., multi-layered graphs and heterogeneous graphs. We define a multiplication on $\calS$ as follows. For two matrices $\bM,\bN\in \calS$, the $t$-th column of $\bM*\bN$ is the GSP convolution of the $t$-th columns of $\bM,\bN$ w.r.t.\ $\bU_t$. As $\calS$ can be vectorized, by the same argument as in the proof of \cref{thm:tia}, one shows that $\widehat{\calS}$ can be identified with $\mathfrak{B} = \{\bU_{t,j}, t\in T, 1\leq j\leq n\}$, where $\bU_{t,j}$ is the $j$-th column of $\bU_t$. The explicit formula for the character is given by
\begin{align*}
    \nu_{t,j}(\bM) = \langle \bU_{t,j}, \bM_t\rangle,
\end{align*}
where $\bM_t$ is the $j$-th column of $\bM$. In the spirit of the unification, $\mathfrak{B}$ can be considered as the Fourier kernel for a transformation $\widehat{\cdot}: \calS \to \calS$. This agrees with what has been proposed in \cite{Ji23}, while the latter considers more general parameter spaces $T$ other than finite sets.  

In addition, in \cite{Ji23}, $T$ is assumed to be a probability space. To process a signal on $V$, we only need to introduce two operators $\phi_1: \mathbb{R}^n \to \calS$ and $\phi_2: \calS \to \mathbb{R}^n$ as ``dictionaries'' to translate between the two signal spaces. For example, we may choose $\phi_1(\bx)$ such that the vector associated with each $t\in T$ is $\bx$, while $\phi_2$ is the expectation operator $\mathbb{E}(\cdot)$ w.r.t.\ the measure on $T$. We may use $\phi = \phi_2\circ \widehat{\phi_1(\cdot)}: \mathbb{R}^n \to \mathbb{R}^n$ to analyze signals analogous to GFT in GSP, however, it is in general not an orthogonal base change. 
\end{Example}

\section{Conclusion}
In this article, we give an overview of the notion of convolution in theories of Fourier type such as the Pontryagin theory, the theories of GSP and ASP. It appears in different forms. A unifying view is provided. 

\bibliographystyle{IEEEtran}
\bibliography{allref}

\end{document}